\DeclareMathOperator*{\supp}{supp}
\DeclareMathOperator*{\dom}{Domain}
\newtheorem{The}{Theorem}[section]
\newtheorem{Ass}[The]{Assumption}
\newtheorem{Rem}[The]{Remark}
\newtheorem{Prop}[The]{Proposition}
\newtheorem{Cor}[The]{Corollary}
\newtheorem{Lem}[The]{Lemma}
\begin{document}
\begin{flushleft}
{ \Large \bf Minimal velocity bound for Schr\"odinger-type operator with fractional powers}
\end{flushleft}

\begin{flushleft}
{\large Atsuhide ISHIDA}\\
{Katsushika Division, Institute of Arts and Sciences, Tokyo University of Science, 6-3-1 Niijuku, Katsushika-ku, Tokyo 125-8585, Japan\\ 
Alfr\'ed R\'enyi Institute of Mathematics, Re\'altanoda utca 13-15, Budapest 1053, Hungary\\
Email: aishida@rs.tus.ac.jp
}
\end{flushleft}

\begin{abstract}
It is known in scattering theory that the minimal velocity bound plays a conclusive role in proving the asymptotic completeness of the wave operators. In this study, we prove the minimal velocity bound and other important estimates for the two-body Schr\"odinger-type operator with fractional powers. We assume that the pairwise potential functions belong to broad classes that include long-range decay and Coulomb-type local singularities. Our estimates are expected to be applied to prove the asymptotic completeness for the fractional Schr\"odinger-type operators in various (not only short-range but also long-range and $N$-body) situations.
\end{abstract}

\quad\textit{Keywords}: 
\begin{minipage}[t]{105mm}
Scattering theory, Fractional Schr\"odinger operator, Relativistic Schr\"odinger operator, Mourre estimate, Propagation estimate, Asymptotic completeness
\end{minipage}
\\\par
\quad\textit{MSC}2020: 81Q10, 81U05, 47B25
\section{Introduction\label{introduction}}
For $s\geqslant0$, let us define the function
\begin{equation}
\Psi_\rho(s)=(s+1)^\rho-1\label{psi1}
\end{equation}
with $0<\rho\leqslant1$. The free dynamics that we consider in this paper are denoted by the symbol $\Psi_\rho$ of the Laplacian
\begin{equation}
\Psi_\rho\left(-\Delta\right)=\Psi_\rho\left(|D|^2\right)\label{free_hamiltonian}
\end{equation}
as a self-adjoint operator acting on $L^2(\mathbb{R}^n)$, where $D$ is the momentum operator $D=-{\rm i}\nabla=-\sqrt{-1}(\partial_1,\ldots,\partial_n)$ with $\partial_j=\partial_{x_j}$ for $1\leqslant j\leqslant n$. More precisely, $\Psi_\rho(|D|^2)$ is defined by the Fourier multiplier
\begin{equation}
\Psi_\rho\left(|D|^2\right)\phi(x)=\mathscr{F}^*\Psi_\rho\left(|\xi|^2\right)\mathscr{F}\phi(x)=\frac{1}{(2\pi)^n}\int\hspace{-2mm}\int_{\mathbb{R}^{2n}}e^{-{\rm i}(x-y)\cdot\xi}\Psi_\rho\left(|\xi|^2\right)\phi(y){\rm d}y{\rm d}\xi
\end{equation}
for $\phi\in H^{2\rho}(\mathbb{R}^n)$, which is the Sobolev space with order $2\rho$, where $\mathscr{F}$ and $\mathscr{F}^*$ respectively denote the Fourier transform on $L^2(\mathbb{R}^n)$ and its inverse. In particular, when $\rho=1$, $\Psi_1(|D|^2)$ is coincident with $-\Delta$ itself and when $\rho=1/2$, $\Psi_{1/2}(|D|^2)$ represents $\sqrt{-\Delta+1}-1$, which is, as is well known, the massive relativistic Schr\"odinger operator. $\Psi_\rho(|D|^2)$ is therefore the generalized Schr\"odinger-type operator in this sense. The full Hamiltonian $H_\rho$ is perturbated by the pairwise interaction potential $V=V(x)$ that is a multiplication operator of the function $V:\mathbb{R}^n\rightarrow\mathbb{R}$; i.e.,
\begin{equation}
H_\rho=\Psi_\rho\left(|D|^2\right)+V.\label{full_hamiltonian}
\end{equation}
In our study, we treat the general potentials that belong to classes as broadly as possible. To prove the main theorem, the minimal velocity bound in Theorem \ref{the1}, and other theorems in Section \ref{section_middle_velocity_bound} and \ref{section_minimal_velocity_bound}, we assume that the value of $V$ vanishes for $|x|\rightarrow\infty$, where $x=(x_1,\ldots,x_n)\in\mathbb{R}^n$. In contrast, to prove the maximal velocity bound in Theorem \ref{the2}, it is sufficient to assume the weaker conditions that guarantee only the self-adjointness of $H_\rho$. Further details are stated in Assumptions \ref{ass1} and \ref{ass2}.\par
In the following Assumption \ref{ass1}, the bracket of $x$ have the usual definition $\langle x\rangle=\sqrt{1+|x|^2}$. $A\lesssim B$ means that there exists a constant $C>0$ such that the inequality $A\leqslant CB$ holds. If emphasizing the dependence of $\alpha$ on $C=C_\alpha$, we write $A\lesssim_\alpha B$.

\begin{Ass}\label{ass1}
$V=V(x)$ is a real-valued-function and decomposes into the sum of three parts:
\begin{equation}
V=V_{\rm sing}+V_{\rm short}+V_{\rm long},\label{interaction1}
\end{equation}
where these real-valued functions $V_{\rm sing}$, $V_{\rm short}$ and $V_{\rm long}$ satisfy respective conditions.\par
$V_{\rm sing}=V_{\rm sing}(x)$ satisfies that, for $\gamma_{\rm sing}>1$, $\langle x\rangle^{\gamma_{\rm sing}}V_{\rm sing}$ belongs to $L^p(\mathbb{R}^n)$, where $p=2$ if $n<4\rho$ and $p>n/(2\rho)$ if $n\geqslant4\rho$.\par
$V_{\rm short}$=$V_{\rm short}(x)$ is bounded on $\mathbb{R}^n$ and has the spatial decay
\begin{equation}
|V_{\rm short}(x)|\lesssim\langle x\rangle^{-\gamma_{\rm short}},\label{short_range_decay}
\end{equation}
where $\gamma_{\rm short}>1$.\par
$V_{\rm long}$=$V_{\rm long}(x)$ belongs to $C^1(\mathbb{R}^n)$ and, for the multi-indices $\beta$ with $0\leqslant|\beta|\leqslant1$, has the spatial decay
\begin{equation}
|\partial_x^\beta V_{\rm long}(x)|\lesssim_\beta\langle x\rangle^{-\gamma_{\rm long}-|\beta|},\label{long_range_decay}
\end{equation}
where $\gamma_{\rm long}>0$.
\end{Ass}

\begin{Rem}\label{rem1}
Although $V_{\rm short}$ and $V_{\rm long}$ are bounded functions, $V_{\rm sing}$ is not always bounded. We therefore have to consider the self-adjointness of $H_\rho$. We provide the proof later in Proposition {\rm\ref{prop1}}. In the case where $0<\rho\leqslant1/4$, we cannot assume that $\langle x\rangle^{\gamma_{\rm sing}}V_{\rm sing}$ belongs to $L^2(\mathbb{R}^n)$ because the dimension $n$ has to satisfy $n<4\rho\leqslant1$. In the case where $3/4<\rho<1$, $V_{\rm sing}$ satisfies that $\langle x\rangle^{\gamma_{\rm sing}}V_{\rm sing}$ belongs to $L^p(\mathbb{R}^n)$, where p=2 if $n\leqslant3$ and $p>n/(2\rho)$ if $n\geqslant4$. These conditions are almost the same as the self-adjointness of the standard Schr\"odinger operator, $-\Delta$ and perturbational potentials; however, we cannot admit that $n/2\leqslant p\leqslant n/(2\rho)$ in our case even though $n\geqslant5$.
\end{Rem}

\begin{Rem}\label{rem2}
If $3/4<\rho\leqslant1$ and $n=3$, the Coulomb-type local singularity, for $\kappa\in\mathbb{R}$,
\begin{equation}
V_{\rm sing}(x)=\kappa|x|^{-1}F\left(|x|\leqslant1\right)\label{rem2_1}
\end{equation}
is admitted, where $F(\cdots)$ denotes the characteristic function of the set $\{\cdots\}$. Practically speaking,
\begin{equation}
\int_{|x|\leqslant1}|x|^{-2}{\rm d}x=\omega_n\int_0^1r^{-3+n}{\rm d}r\label{rem2_2}
\end{equation}
is bounded for $n=3$ and \eqref{rem2_1} belongs to $L^2(\mathbb{R}^3)$, where $\omega_n$ is the surface area of the $n$-dimensional unit sphere. If $1/2<\rho\leqslant3/4$, \eqref{rem2_1} does not belong to $L^2(\mathbb{R}^n)$ for $n=1$ and $2$ because \eqref{rem2_2} is the divergent integral. In this case, when choosing $p$ such that $n/(2\rho)<p<n$,
\begin{equation}
\int_{|x|\leqslant1}|x|^{-p}{\rm d}x=\omega_n\int_0^1r^{-p+n-1}{\rm d}r\label{rem2_3}
\end{equation}
is bounded. Therefore, when $1/2<\rho\leqslant3/4$, \eqref{rem2_1} belongs to $L^p(\mathbb{R}^n)$ and is admitted for all $n\geqslant3$. If $0<\rho\leqslant1/2$, we cannot admit \eqref{rem2_1} no matter what the dimension is. However,
\begin{equation}
V_{\rm sing}(x)=\kappa|x|^{-1+\epsilon}F\left(|x|\leqslant1\right)\label{rem2_4}
\end{equation}
with $\epsilon>0$ that satisfies $1-2\rho<\epsilon<1$ is admitted for all $n\geqslant1$. This is because we can take $p$ that satisfies $n/(2\rho)<p<n/(1-\epsilon)$, and
\begin{equation}
\int_{|x|\leqslant1}|x|^{(-1+\epsilon)p}{\rm d}x=\omega_n\int_0^1r^{(-1+\epsilon)p+n-1}{\rm d}r\label{rem2_5}
\end{equation}
is bounded for $(-1+\epsilon)p+n-1 >-1$ which is equivalent to $p<n/(1-\epsilon)$. This implies that \eqref{rem2_4} belongs to $L^p(\mathbb{R}^n)$.
\end{Rem}

The main result of this paper is the propagation estimate that has the following integral-form. In scattering theory, we often refer to this estimate as the minimal velocity bound. We denote the pure point spectrum of $H_\rho$ by $\sigma_{\rm pp}(H_\rho)$. 

\begin{The}\label{the1}
\textbf{Minimal velocity bound.} Let $f\in C_0^\infty((0,\infty))$ satisfy $\supp f\cap\sigma_{\rm pp}(H_\rho)=\emptyset$ and $\theta_0>0$ be sufficiently small. Then, the inequality
\begin{equation}
\int_1^\infty\left\|F\left(\frac{|x|}{2t}\leqslant\theta_0\right)f(H_\rho)e^{-{\rm i}tH_\rho}\phi\right\|_{L^2(\mathbb{R}^n)}^2\frac{{\rm d}t}{t}\lesssim\|\phi\|_{L^2(\mathbb{R}^n)}^2\label{minimal_velocity_bound}
\end{equation}
holds for $\phi\in L^2(\mathbb{R}^n)$.
\end{The}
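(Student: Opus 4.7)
The plan is to follow the propagation--observable method of Sigal--Soffer and Derezi\'nski--G\'erard, suitably adapted to the nonlocal generator $\Psi_\rho(|D|^2)$. The ingredients I expect to have in place from the preceding sections are a Mourre estimate
\begin{equation*}
f(H_\rho)\,i[H_\rho,A]\,f(H_\rho)\ge c_0\,f(H_\rho)^2
\end{equation*}
with conjugate operator $A=\tfrac{1}{2}(x\cdot D+D\cdot x)$, valid because $\supp f\subset(0,\infty)$ is bounded away from both the threshold $0$ and $\sigma_{\mathrm{pp}}(H_\rho)$; the maximal velocity bound of Theorem \ref{the2}; and the middle--velocity bound of Section \ref{section_middle_velocity_bound}. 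These already control the regions where $|x|/(2t)\ge\theta_0$ and localize the problem to the inner region.

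The first step is to pick a smooth non--increasing cutoff $\chi\in C^{\infty}([0,\infty))$ with $\chi=1$ on $[0,\theta_0]$ and $\chi=0$ on $[2\theta_0,\infty)$, set $G(s)=\int_s^{\infty}\chi(r)^{2}\,dr$, and form the uniformly bounded propagation observable $\Phi(t)=f(H_\rho)\,G(|x|/t)\,f(H_\rho)$. Its Heisenberg derivative $\mathcal{D}\Phi(t)=\partial_t\Phi(t)+i[H_\rho,\Phi(t)]$ has a manifestly non--negative time part $\partial_t\Phi(t)=t^{-2}|x|\,f(H_\rho)\chi(|x|/t)^{2}f(H_\rho)$, supported however only in the annulus $\theta_0\le|x|/t\le2\theta_0$. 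The contribution that actually produces the desired cutoff $F(|x|/(2t)\le\theta_0)$ is the commutator term $i[\Psi_\rho(|D|^{2}),G(|x|/t)]$, whose leading symbol, via pseudodifferential or Helffer--Sj\"ostrand calculus, equals $-t^{-1}\chi(|x|/t)^{2}\,\hat{x}\cdot\nabla\Psi_\rho(|\xi|^{2})$ (with $\nabla\Psi_\rho(|\xi|^{2})=2\rho\xi(|\xi|^{2}+1)^{\rho-1}$ the group velocity). After symmetrization and sandwiching with $f(H_\rho)$, the Mourre inequality supplies its lower bound. The commutators with $V_{\mathrm{short}}+V_{\mathrm{long}}$ give integrable remainders thanks to \eqref{short_range_decay}--\eqref{long_range_decay}, while $V_{\mathrm{sing}}$ is regularized by the $f(H_\rho)$ factors through Proposition \ref{prop1}. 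Putting these together I expect to obtain
\begin{equation*}
\mathcal{D}\Phi(t)\ge \frac{c_0}{t}\,f(H_\rho)\,F(|x|/(2t)\le\theta_0)\,f(H_\rho)+R(t),\qquad \int_{1}^{\infty}\|R(t)\phi\|\,dt\lesssim\|\phi\|,
\end{equation*}
whence \eqref{minimal_velocity_bound} follows by pairing with $e^{-itH_\rho}\phi$, integrating from $1$ to $T$, applying the fundamental theorem of calculus, and using $\|\Phi(t)\|\lesssim 1$ together with unitarity of $e^{-itH_\rho}$.

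The hard part will be the commutator $i[\Psi_\rho(|D|^{2}),G(|x|/t)]$ for non--integer $\rho$. For $\rho=1$ this reduces to the transparent first--order operator $-2\nabla G\cdot D-\Delta G$; for $\rho\in(0,1)$ the multiplier $\Psi_\rho(|\xi|^{2})$ is genuinely nonlocal, and a careful symbolic expansion is needed to separate the Mourre--positive principal part of size $t^{-1}$ from $O(t^{-1-\epsilon})$ remainders integrable in $t$. A related subtlety is that $G(|x|/t)$ is not a Schwartz symbol in $x$ (its derivatives gain only factors of $t^{-k}$), so the standard symbolic bounds must be rescaled with $t$. Once these expansions are controlled, and once the error estimates for $V$ are carried out using the $L^{p}$ hypothesis on $\langle x\rangle^{\gamma_{\mathrm{sing}}}V_{\mathrm{sing}}$ and the pointwise decay of $V_{\mathrm{short}},V_{\mathrm{long}}$, the concluding integration is essentially formal.
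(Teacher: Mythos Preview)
Your observable $\Phi(t)=f(H_\rho)G(|x|/t)f(H_\rho)$ is the one that proves the \emph{maximal} velocity bound, and it does not produce the Mourre commutator. The Heisenberg derivative of $G(|x|/t)$ has principal part
\[
-\frac{1}{t}\,\chi\!\left(\frac{|x|}{t}\right)^{2}\Bigl\{\hat{x}\cdot 2\Psi'_\rho(|D|^2)D-\frac{|x|}{t}\Bigr\},
\]
i.e.\ a \emph{radial velocity} minus $|x|/t$; this is not $i[H_\rho,A]$, and the Mourre inequality gives no lower bound for it. (Your description of $\partial_t\Phi$ as supported in the annulus is also off: $\chi=1$ on $[0,\theta_0]$.) The paper, following Derezi\'nski--G\'erard Prop.~4.4.7 and Isozaki Thm.~5.11, therefore uses an observable that contains the conjugate operator explicitly,
\[
\mathscr{L}(t)=f(H_\rho)\,\mathscr{M}(t)\,g(H_\rho)\,\frac{A_\rho}{t}\,g(H_\rho)\,\mathscr{M}(t)\,f(H_\rho),
\qquad
\mathscr{M}(t)=\tfrac12\Bigl\{\Psi'_\rho(|D|^2)D-\tfrac{x}{2t}\Bigr\}\!\cdot\!\tfrac{x}{|x|}\chi'\!\Bigl(\tfrac{|x|}{2t}\Bigr)+{\rm hc}+\chi\!\Bigl(\tfrac{|x|}{2t}\Bigr).
\]
Its Heisenberg derivative splits into four pieces $I_1,\dots,I_4$; $I_4$ is exactly $t^{-1}f\mathscr{M}g\,i[H_\rho,A_\rho]_{-2\rho}\,g\mathscr{M}f$, to which Theorem~\ref{the4} applies, and the middle velocity bound (Theorem~\ref{the3}) is needed to absorb the $\mathscr{K}(t)^2$ cross term and the $I_1$ term. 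The simple $\Phi(t)$ skips all of this structure, and the step ``the Mourre inequality supplies its lower bound'' does not go through.

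A second issue is the conjugate operator. You take $A=\tfrac12(x\cdot D+D\cdot x)$, but for $0<\rho<1/2$ this fails: the extended commutator $[V_{\rm sing},A]_{-2\rho}$ cannot be defined because $\langle x\rangle^{-1}A$ needs a full $\langle D\rangle$ while only $\langle D\rangle^{2\rho}$ is controlled by $g(H_\rho)$, and $Ag(H_\rho)\langle x\rangle^{-1}$ is unbounded (see Remark~\ref{rem5} and the Appendix). The paper instead uses
\[
A_\rho=\frac{1}{2\rho^2}\bigl\{\Psi'_\rho(|D|^2)D\cdot x+x\cdot D\,\Psi'_\rho(|D|^2)\bigr\},
\]
proves its essential self-adjointness (Proposition~\ref{prop2}), the domain property $(z-H_\rho)^{-1}C_0^\infty\subset\dom A_\rho$ (Lemma~\ref{lem3}), and the Mourre estimate (Theorem~\ref{the4}) for this $A_\rho$. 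With $A_\rho$ one has $\langle x\rangle^{-1}A_\rho\lesssim\langle D\rangle^{2\rho-1}$, which is what makes the commutators with $V_{\rm sing}$ and the boundedness of $\mathscr{L}(t)$ work uniformly in $0<\rho\le 1$.
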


This propagation estimate \eqref{minimal_velocity_bound} is powerful and if $V$ has the short-range parts only, the asymptotic completeness of the wave operators can be obtained immediately. With regard to the long-range case, if we construct some type of modification of the wave operators, Theorem \ref{the1} can also be applied to the proof of the asymptotic completeness of the modified wave operators. Moreover, it is known that the propagation estimates of the integral-form are available to the $N$-body case. As we state below, although there are some results of the minimal velocity bound with the integral-form for the standard Schr\"odinger time evolution, the case for the Schr\"odinger-type operator with the general fractional powers has not been discussed up until our study. Of course, in the fractional powers, the relativistic quantum case $\rho=1/2$ is most important physically. However, the both cases where $0<\rho<1/2$ and $1/2<\rho<1$ are of mathematical interest and challenging. For instance, in the case where $1/2<\rho<1$, $\Psi'_\rho(|D|^2)\langle D\rangle$ is not bounded and we have to make full use of the energy cut off. When $0<\rho<1/2$, $\Psi'_\rho(|D|^2)\langle D\rangle$ is bounded, while $\langle D\rangle\langle H_\rho\rangle^{-1}$ is not bounded and this difficulty affects parts of our discussions.\par
In section \ref{section_minimal_velocity_bound}, we prove the Mourre estimate in Theorem \ref{the4}. In our proof of Theorem \ref{the1}, the Mourre inequality also fulfills a crucial role. In Mourre theory, it is important to find a conjugate operator. We employ the choice $A_\rho$ (see \eqref{conjugate2}) and prove the isolatedness and finite multiplicity of $\sigma_{\rm pp}(H_\rho)\setminus\{0\}$ in Corollary \ref{cor1} using the Mourre inequality.\par
It seems that the minimal velocity bound with the integral-form was first obtained by Sigal and Soffer \cite[Theorem 4.2]{SiSo} for the long-range and $N$-body Schr\"odinger operator. We currently refer to the works of Derezi\'nski and G\'erard \cite[Propositions 4.4.7 and 6.6.8]{DeGe} and Isozaki \cite[Theorems 2.38 and 3.11]{Iso}, which explain in detail the method of reaching the minimal velocity bound for the standard Schr\"odinger operators in the cases of two- to $N$-body. In the same manner as for the standard Schr\"odinger case, in proving Theorem \ref{the1}, we need the maximal velocity bound in Theorem \ref{the2} and the middle velocity bound in Theorem \ref{the3}. The maximal velocity bound with the integral-form was first proved by \cite[Theorem 4.3]{SiSo}. The middle velocity bound with the integral-form was first proved by Graf \cite[Theorem 4.3]{Gr} for the short-range $N$-body Schr\"odinger operator. Meanwhile, the minimal velocity bound with pointwise-form initiated by Skibsted \cite{Sk} and G\'erard \cite{Ge} is also an important estimate with which to prove the asymptotic completeness. The pointwise-form of the conjugate operator was developed by Hunziker, Sigal and Soffer \cite{HuSiSo} and Richard \cite{Ri} in the abstract settings.\par
Scattering theory for the Schr\"odinger-type operator with fractional powers has been studied. Gire \cite{Gi} considered general functions of the Laplacian that included the relativistic Schr\"odinger operator and discussed the asymptotic completeness for the short-range potentials by investigating the semigroup differences. Kitada \cite{Ki1,Ki2} constructed long-range scattering theory for the fractional Laplacian $(-\Delta)^\rho$ with $1/2\leqslant\rho\leqslant1$ adopting the Enss method and smooth perturbation theory. Ishida \cite{Is} studied inverse scattering for $(-\Delta)^\rho$ with $1/2<\rho\leqslant1$. It is noteworthy that although it was only the case of the massless relativistic Schr\"odinger operator $\sqrt{-\Delta}$, Soffer \cite{So} obtained the integral-form minimal velocity bound by using its pointwise maximal velocity bound. Ishida and Wada \cite{IsWa} considered non-local Schr\"odinger operators that included the Bernstein functions of the Laplacian, and they decided the threshold between short- and long-range decay conditions of the potential functions by providing a counter-example such that the wave operators did not exist.\par
At the end of this section, we prove the self-adjointness of $H_\rho$. By virtue of the Kato--Rellich theorem (Reed and Simon \cite[Theorem X.12]{ReSi}) and following Proposition \ref{prop1}, if $V$ satisfies Assumption \ref{ass1}, then $H_\rho=\Psi_\rho\left(|D|^2\right)+V$ is essentially self-adjoint with the core $C_0^\infty(\mathbb{R}^n)$. The original idea of this proof for the standard Schr\"odinger operator $-\Delta+V$ is explained in \cite[Theorem X.20]{ReSi} and \cite[Lemma 1.9]{Iso}.

\begin{Prop}\label{prop1}
Suppose the real-valued function $\hat{V}_{\rm sing}=\hat{V}_{\rm sing}(x)$ satisfies that $\hat{V}_{\rm sing}$ belongs to $L^p(\mathbb{R}^n)$, where $p=2$ if $n<4\rho$ and $p>n/(2\rho)$ if $n\geqslant4\rho$. Then, for any $\epsilon>0$ and $\phi\in C_0^\infty(\mathbb{R}^n)$, there exists a constant $C_\epsilon>0$ such that
\begin{equation}
\bigl\|\hat{V}_{\rm sing}\phi\bigr\|_{L^2(\mathbb{R}^n)}\leqslant\epsilon\left\|\Psi_\rho\left(|D|^2\right)\phi\right\|_{L^2(\mathbb{R}^n)}+C_\epsilon\|\phi\|_{L^2(\mathbb{R}^n)}\label{relatively_bound1}
\end{equation}
holds.
\end{Prop}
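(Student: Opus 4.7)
The strategy adapts the classical Kato--Rellich argument for $-\Delta+V$ in Reed--Simon, Theorem X.20, by passing to the Fourier side and introducing a scaling parameter $a>1$ so that the coefficient of the unbounded term in the relative bound can be made arbitrarily small. First, apply H\"older's inequality to obtain
\[
\|\tilde V_{\rm sing}\phi\|_{L^2(\mathbb{R}^n)} \leqslant \|\tilde V_{\rm sing}\|_{L^p(\mathbb{R}^n)}\,\|\phi\|_{L^r(\mathbb{R}^n)},
\]
where $1/r=1/2-1/p$: this forces $r=\infty$ when $n<4\rho$ (so $p=2$), and $r\in(2,\infty)$ when $n\geqslant 4\rho$ (since then $p>n/(2\rho)\geqslant 2$). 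It therefore suffices to show that, for any $\epsilon>0$,
\[
\|\phi\|_{L^r(\mathbb{R}^n)} \leqslant \epsilon\|\Psi_\rho(|D|^2)\phi\|_{L^2(\mathbb{R}^n)} + C_\epsilon\|\phi\|_{L^2(\mathbb{R}^n)}
\]
holds for $\phi\in C_0^\infty(\mathbb{R}^n)$, with an arbitrarily small leading constant.

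To establish this Sobolev-type bound, I fix $a>1$ and factor $\hat\phi(\xi)=(|\xi|^2+a^2)^{-\rho}\cdot(|\xi|^2+a^2)^\rho\hat\phi(\xi)$. By the Hausdorff--Young inequality when $r<\infty$ (or the trivial bound $\|\phi\|_{L^\infty}\lesssim\|\hat\phi\|_{L^1}$ when $r=\infty$), followed by H\"older on the Fourier side with exponents $p$ and $2$ (note $1/p+1/2=1-1/r=1/r'$),
\[
\|\phi\|_{L^r(\mathbb{R}^n)} \lesssim \|\hat\phi\|_{L^{r'}(\mathbb{R}^n)} \leqslant \|(|\xi|^2+a^2)^{-\rho}\|_{L^p(\mathbb{R}^n)}\,\|(|\xi|^2+a^2)^\rho\hat\phi\|_{L^2(\mathbb{R}^n)}.
\]
The dilation $\xi=a\eta$ then yields
\[
\|(|\xi|^2+a^2)^{-\rho}\|_{L^p(\mathbb{R}^n)}^p = a^{n-2\rho p}\int_{\mathbb{R}^n}(|\eta|^2+1)^{-\rho p}\,d\eta,
\]
and the integral converges precisely because $\rho p>n/2$, which is exactly our hypothesis on $p$ (including $p=2$ when $n<4\rho$). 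Since $n-2\rho p<0$, this $L^p$ norm tends to $0$ as $a\to\infty$.

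The remaining factor is controlled by the subadditivity $(s+t)^\rho\leqslant s^\rho+t^\rho$ for $0<\rho\leqslant 1$, which together with the monotonicity of $t\mapsto t^\rho$ gives
\[
(|\xi|^2+a^2)^\rho \leqslant |\xi|^{2\rho}+a^{2\rho} \leqslant (|\xi|^2+1)^\rho+a^{2\rho} = \Psi_\rho(|\xi|^2)+1+a^{2\rho},
\]
hence $\|(|\xi|^2+a^2)^\rho\hat\phi\|_{L^2}\leqslant \|\Psi_\rho(|D|^2)\phi\|_{L^2}+(1+a^{2\rho})\|\phi\|_{L^2}$. Combining everything leads to an estimate of the form
\[
\|\phi\|_{L^r} \lesssim a^{(n-2\rho p)/p}\|\Psi_\rho(|D|^2)\phi\|_{L^2} + a^{(n-2\rho p)/p}(1+a^{2\rho})\|\phi\|_{L^2},
\]
and choosing $a$ large enough that $\|\tilde V_{\rm sing}\|_{L^p}$ times the leading coefficient is smaller than $\epsilon$ finishes the argument; $C_\epsilon$ is then determined. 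The principal---though mild---obstacle is the non-integer exponent $\rho$: the classical $-\Delta$ proof uses the exact splitting of $-\Delta+a^2$ on the Fourier side, whereas here one must replace it with the inequality $(s+t)^\rho\leqslant s^\rho+t^\rho$, a step that crucially requires $\rho\leqslant 1$ and is the reason the argument does not extend to higher-order symbols.
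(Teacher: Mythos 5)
Your proof is correct and follows essentially the same route as the paper: H\"older's inequality, the Hausdorff--Young inequality, and a second H\"older on the Fourier side against a weight whose $L^p$ norm can be made arbitrarily small by tuning a parameter. The only (cosmetic) difference is that you use the weight $(|\xi|^2+a^2)^{-\rho}$ with $a\to\infty$, which forces the extra subadditivity step $(|\xi|^2+a^2)^\rho\leqslant\Psi_\rho(|\xi|^2)+1+a^{2\rho}$, whereas the paper uses $\{1+\delta\Psi_\rho(|\xi|^2)\}^{-1}$ with $\delta\to0$ and so reads off the $\Psi_\rho(|D|^2)$ term directly.
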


\begin{proof}[Proof of Proposition \ref{prop1}]
Let $0<\delta<1/2$. We note that
\begin{gather}
\int_{\mathbb{R}^n}\frac{{\rm d}\xi}{\left\{1+\delta\Psi_\rho\left(|\xi|^2\right)\right\}^p}=\omega_n\delta^{-n/(2\rho)}\int_0^\infty\frac{\eta^{n-1}{\rm d}\eta}{\left\{1-\delta+\left(\delta^{1/\rho}+\eta^2\right)^\rho\right\}^p}\nonumber\\
\leqslant\omega_n\delta^{-n/(2\rho)}\left\{2^p+\int_1^\infty \eta^{n-1-2\rho p}{\rm d}\eta\right\}
\end{gather}
with a changing variable $\eta=\delta^{1/(2\rho)}|\xi|$, and that
\begin{equation}
\left\|\left\{1+\delta\Psi_\rho\left(|\xi|^2\right)\right\}^{-1}\right\|_{L^p(\mathbb{R}_\xi^n)}\lesssim\delta^{-n/(2\rho p)}\label{prop1_1}
\end{equation}
because $n-1-2\rho p<-1$. If $n<4\rho$, we express $\phi\in C_0^\infty(\mathbb{R}^n)$ by
\begin{equation}
\phi(x)=\frac{1}{(2\pi)^{n/2}}\int_{\mathbb{R}^n}e^{{\rm i}x\cdot\xi}\left\{1+\delta\Psi_\rho\left(|\xi|^2\right)\right\}^{-1}\left\{1+\delta\Psi_\rho\left(|\xi|^2\right)\right\}\mathscr{F}\phi(\xi){\rm d}\xi
\end{equation}
and there exists $C>0$ such that
\begin{gather}
|\phi(x)|\lesssim\left\|\left\{1+\delta\Psi_\rho\left(|\xi|^2\right)\right\}^{-1}\right\|_{L^2(\mathbb{R}_\xi^n)}\left\|\left\{1+\delta\Psi_\rho\left(|\xi|^2\right)\right\}\mathscr{F}\phi\right\|_{L^2(\mathbb{R}_\xi^n)}\nonumber\\
\leqslant C\left(\delta^{1-n/(4\rho)}\left\|\Psi_\rho\left(|D|^2\right)\phi\right\|_{L^2}+\delta^{-n/(4\rho)}\|\phi\|_{L^2}\right),\label{prop1_2}
\end{gather}
using the Schwarz inequality and \eqref{prop1_1} for $p=2$. If making $\delta$ small such that $C\delta^{1-n/(4\rho)}\|\hat{V}_{\rm sing}\|_{L^2}\leqslant\epsilon$, then \eqref{prop1_2} and
\begin{equation}
\bigl\|\hat{V}_{\rm sing}\phi\bigr\|_{L^2}\leqslant\bigl\|\hat{V}_{\rm sing}\bigr\|_{L^2}\sup_{x\in\mathbb{R}^n}|\phi(x)|
\end{equation}
imply \eqref{relatively_bound1}. We next assume that $n\geqslant4\rho$ and $p>n/(2\rho)$. For $q_1=2p/(p-2)$, by the H\"older inequality,
\begin{equation}
\bigl\|\hat{V}_{\rm sing}\phi\bigr\|_{L^2}\leqslant\bigl\|\hat{V}_{\rm sing}\bigr\|_{L^p}\|\phi\|_{L^{q_1}}.
\end{equation}
holds. For $q_2=q_1/(q_1-1)=2p/(p+2)$, by the Hausdorff--Young inequality \cite[Theorem IX.8]{ReSi}, we have
\begin{equation}
\|\phi\|_{L^{q_1}}\leqslant(2\pi)^{n(1/2-1/q_2)}\left\|\mathscr{F}\phi\right\|_{L^{q_2}}
\end{equation}
noting that $q_1>2$ and $1<q_2<2$. Using the H\"older inequality and \eqref{prop1_1} again, we have
\begin{gather}
\left\|\mathscr{F}\phi\right\|_{L^{q_2}}\leqslant\left\|\left\{1+\delta\Psi_\rho\left(|\xi|^2\right)\right\}^{-1}\right\|_{L^p(\mathbb{R}_\xi^n)}\left\|\left\{1+\delta\Psi_\rho\left(|\xi|^2\right)\right\}\mathscr{F}\phi\right\|_{L^2(\mathbb{R}_\xi^n)}\nonumber\\
\lesssim\delta^{1-n/(2\rho p)}\left\|\Psi_\rho\left(|D|^2\right)\phi\right\|_{L^2}+\delta^{-n/(2\rho p)}\|\phi\|_{L^2}.
\end{gather}
This completes the proof.
\end{proof} 

\section{Maximal velocity bound}\label{section_maximal_velocity_bound}
In this section, we prove the propagation estimate for the high-velocity region in Theorem \ref{the2}, which is needed for the proof of Theorem \ref{the3} in the next section. We often refer to this estimate as the maximal velocity bound. If we prove Theorem \ref{the2} only, the bounded parts of the potential function $V_{\rm short}+V_{\rm long}$ do not necessarily disappear for $|x|\rightarrow\infty$, and the singular part $V_{\rm sing}$ can decay far more slowly. Throughout this section, instead of Assumption \ref{ass1}, we assume the following Assumption \ref{ass2}.

\begin{Ass}\label{ass2}
$V=V(x)$ is a real-valued function and decomposes into the sum of two parts:
\begin{equation}
V=\hat{V}_{\rm sing}+V_{\rm bdd},\label{interaction2}
\end{equation}
where $\hat{V}_{\rm sing}=\hat{V}_{\rm sing}(x)$ satisfies the conditions in Proposition \ref{prop1} while $V_{\rm bdd}=V_{\rm bdd}(x)$ belongs to $L^\infty(\mathbb{R}^n)$.
\end{Ass}

\begin{Rem}\label{rem3}
In the case where $\rho=1$ {\rm(}i.e., the standard Schr\"odinger operator case{\rm )}, $V_{\rm bdd}$ in \eqref{interaction2} can be replaced with $\check{V}_{\rm sing}=\check{V}_{\rm sing}(x)$ that belongs to $L_{\rm loc}^2(\mathbb{R}^n)$ and satisfies
\begin{equation}
\check{V}_{\rm sing}(x)\gtrsim-\langle x\rangle^2
\end{equation}
by applying the Kato distributional inequality {\rm\cite[Theorems X.27]{ReSi}} and Faris-Lavine theorem {\rm\cite[Theorems X.38]{ReSi}}. This means that the potential function $V$ can be allowed to grow in $x$ to prove Theorem {\rm\ref{maximal_velocity_bound}} only. We note that $\dom(-\Delta+V)$ does not always coincide with $H^2(\mathbb{R}^n)$ in this case.
\end{Rem}

Under Assumption \ref{ass2}, $H_\rho=\Psi_\rho(|D|^2)+V$ is self-adjoint by Proposition \ref{prop1}. We here note again that if $V$ satisfies Assumption \ref{ass1}, $V$ also satisfies Assumption \ref{ass2}. The maximal velocity bound is stated as the following theorem. The corresponding propagation estimate for the standard two-body Schr\"odinger operator is detailed in \cite[Proposition 4.2.1]{DeGe} and \cite[Theorem 2.31]{Iso}.

\begin{The}\label{the2}
\textbf{Maximal velocity bound.} Take $f\in C_0^\infty(\mathbb{R})$ arbitrarily. There exists $\Theta>0$ such that, for any $\theta>\Theta$ and $\phi\in L^2(\mathbb{R}^n)$,
\begin{equation}
\int_1^\infty\left\|F\left(\Theta\leqslant\frac{|x|}{2t}\leqslant\theta\right)f(H_\rho)e^{-{\rm i}tH_\rho}\phi\right\|_{L^2(\mathbb{R}^n)}^2\frac{{\rm d}t}{t}\lesssim\|\phi\|_{L^2(\mathbb{R}^n)}^2\label{maximal_velocity_bound}
\end{equation}
holds.
\end{The}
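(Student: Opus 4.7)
The plan is to apply the Sigal--Soffer propagation observable method. Choose a bounded nonincreasing $\chi\in C^\infty(\mathbb R)$ with $\chi'\le 0$ supported in $[\Theta,\theta_+]$ for some $\theta_+>\theta$, arranged so that $-\chi'(s)\gtrsim F(\Theta\le s\le\theta)^2$ (for instance $\chi(s)=\int_s^\infty F_0(\sigma)^2\,d\sigma$ with $F_0$ a smooth bump supported in $[\Theta,\theta_+]$ dominating $F(\Theta\le\cdot\le\theta)$). Take the uniformly bounded propagation observable
\[
\Phi(t)=f(H_\rho)\,\chi(|x|/(2t))\,f(H_\rho).
\]
The aim is to show that its Heisenberg derivative $D\Phi(t):=\partial_t\Phi(t)+i[H_\rho,\Phi(t)]$ satisfies
\[
D\Phi(t)\ \ge\ \frac{c}{t}\,f(H_\rho)\,F(\Theta\le|x|/(2t)\le\theta)^2\,f(H_\rho)+R(t)
\]
with $\int_1^\infty\|R(t)\|\,dt<\infty$; pairing with $e^{-itH_\rho}\phi$ and integrating over $[1,T]$ then yields \eqref{maximal_velocity_bound} from the uniform boundedness of $\Phi$.

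Writing $y=|x|/(2t)$, since $f(H_\rho)$ commutes with $H_\rho$ and $V$ commutes with the multiplication operator $\chi(|x|/(2t))$,
\[
D\Phi(t)=f(H_\rho)\bigl(\partial_t\chi(y)+i[\Psi_\rho(|D|^2),\chi(y)]\bigr)f(H_\rho).
\]
A direct computation gives $\partial_t\chi(y)=t^{-1}(-y\chi'(y))\ge 0$. For the commutator, the symbol $\Psi_\rho(|\xi|^2)=(|\xi|^2+1)^\rho-1$ is $C^\infty(\mathbb R_\xi^n)$, so Weyl pseudodifferential calculus (alternatively, a Helffer--Sj\"ostrand representation of $\Psi_\rho$ combined with resolvent identities) yields
\[
i[\Psi_\rho(|D|^2),\chi(y)]=\frac{1}{t}B(t)+R_0(t),
\]
where $B(t)$ is the self-adjoint Weyl quantization of $\Psi_\rho'(|\xi|^2)(\xi\cdot x/|x|)\chi'(y)$ and $\|R_0(t)\|=O(t^{-2})$, because each additional $x$-derivative of $\chi(|x|/(2t))$ brings a factor $(2t)^{-1}$.

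On $\supp\chi'$ one has $y\ge\Theta$, so $-y\chi'(y)\ge\Theta(-\chi'(y))$. By Proposition \ref{prop1}, $V$ is $\Psi_\rho(|D|^2)$-bounded with arbitrarily small relative bound, hence $\Psi_\rho(|D|^2)f(H_\rho)=H_\rho f(H_\rho)-Vf(H_\rho)$ is a bounded operator on $L^2(\mathbb R^n)$, and from this one deduces $\|\Psi_\rho'(|D|^2)D\,f(H_\rho)\|\le C_f$ for some $C_f=C_f(f,\rho)$. Consequently, on the range of $f(H_\rho)$,
\[
(-y\chi'(y))+\Psi_\rho'(|D|^2)(D\cdot x/|x|)\chi'(y)\ \ge\ (\Theta-C_f)(-\chi'(y)),
\]
and choosing $\Theta>C_f$ together with $-\chi'(y)\gtrsim F(\Theta\le y\le\theta)^2$ delivers the required lower bound on $D\Phi(t)$.

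The main obstacle lies in the rigorous pseudodifferential commutator expansion and in verifying $\|\Psi_\rho'(|D|^2)Df(H_\rho)\|<\infty$, and the argument splits by regime. For $1/2<\rho\le 1$ the group-velocity symbol $\Psi_\rho'(|\xi|^2)\xi$ is unbounded in $\xi$, so the energy cutoff $f(H_\rho)$ must genuinely confine $|D|$ via Proposition \ref{prop1}. For $0<\rho\le 1/2$ the symbol is globally bounded on $\mathbb R_\xi^n$; nevertheless, as noted in the introduction, $\langle D\rangle\langle H_\rho\rangle^{-1}$ is not a bounded operator in that regime, so one cannot blindly substitute $H_\rho$-control for $|D|$-control throughout. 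Securing the remainder bound $\|R_0(t)\|=O(t^{-2})$ uniformly in $\rho$, and symmetrizing the principal part $B(t)$ so that the operator inequality above is meaningful on the range of $f(H_\rho)$, constitute the core technical work.
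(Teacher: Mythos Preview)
Your proposal is correct and follows essentially the same Sigal--Soffer propagation-observable strategy as the paper: the paper takes the observable $f(H_\rho)X(|x|/(2t))f(H_\rho)$ with $X(s)=\int_{-\infty}^s\chi^2(\tau)\,d\tau$ (the mirror image of your nonincreasing $\chi$), expands $[\Psi_\rho(|D|^2),X]$ via the Helffer--Sj\"ostrand-based commutator formula of Lemma~\ref{lem1} to extract the principal term $\Psi'_\rho(|D|^2)D\cdot(x/|x|)\chi^2$ with an $O(t^{-2})$ remainder, and then chooses $\Theta$ large compared to $\sum_j\|g(H_\rho)\Psi'_\rho(|D|^2)D_j\|$. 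The symmetrization step you correctly flag as ``core technical work'' is carried out in the paper by introducing $g\in C_0^\infty(\mathbb R)$ with $fg=f$ and proving $[g(H_\rho),\chi(|x|/(2t))]=O(t^{-1})$ and $[\Psi'_\rho(|D|^2)D_j,\chi(|x|/(2t))]=O(t^{-1})$ (again via Helffer--Sj\"ostrand, see \eqref{the2_4}--\eqref{the2_7}), which lets one write $f\Psi'_\rho(|D|^2)D\cdot(x/|x|)\chi^2 f=f\chi\, g(H_\rho)\Psi'_\rho(|D|^2)D\cdot(x/|x|)\,\chi f+O(t^{-1})$ and then apply Cauchy--Schwarz; this is exactly what is needed to make your displayed operator inequality rigorous.
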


We provide preparations in advance of the proof of Theorem \ref{the2}. To analyze $\Psi_\rho(|D|^2)$ as a function of the Laplacian, we make efficient use of the almost analytic extension and commutator expansions. We thus extend the domain of $\Psi_\rho(s)$ to a full real axis and employ the function $\Psi_\rho\in\ C^\infty(\mathbb{R})$ such that
\begin{equation}
\Psi_\rho(s)=
\begin{cases}
\ (s+1)^\rho-1 &\quad \mbox{if}\ s\geqslant0,\\
\ 0&\quad \mbox{if}\ s\leqslant-1
\end{cases}
\end{equation}
for $0<\rho<1$. This $\Psi_\rho$ satisfies, for all $k\in\mathbb{N}\cup\{0\}$,
\begin{equation}
\left|\frac{{\rm d}^k}{{\rm d}s^k}\Psi_\rho(s)\right|\lesssim_k\langle s\rangle^{\rho-k}
\end{equation}
on $\mathbb{R}$. We therefore find a function $\tilde{\Psi}_\rho\in C^\infty(\mathbb{C})$, called an almost analytic extension of $\Psi_\rho$ \cite[Propositions C.2.1 and C.2.2]{DeGe}; i.e., $\tilde{\Psi}_\rho$ with
\begin{equation}
\supp\tilde{\Psi}_\rho\subset\left\{z\in\mathbb{C}\bigm||{\rm Im}z|\lesssim\langle{\rm Re}z\rangle\right\}
\end{equation}
satisfies that $\tilde{\Psi}_\rho(s)=\Psi_\rho(s)$ for $s\in\mathbb{R}$ and that
\begin{equation}
\bigl|\bar{\partial_z}\tilde{\Psi}_\rho(z)\bigr|\lesssim_N|{\rm Im}z|^N\langle z\rangle^{\rho-1-N}\label{almost_analytic_extension}
\end{equation}
for $N\in\mathbb{N}$, where $\bar{\partial_z}=(\partial_{{\rm Re}z}+{\rm i}\partial_{{\rm Im}z})/2$. One of the most effective applications of the almost analytic extension is the Helffer--Sj\"ostrand formula originated by \cite[Proposition 7.2]{HeSj} (see also \cite[Theorem 1.17]{Iso}). Unfortunately, we can not apply this formula to $\Psi_\rho(|D|^2)$ directly because $\rho>0$. However, when $0<\rho<1$, we can consider commutator expansions with a function of $x$ by applying the Helffer--Sj\"ostrand formula to $\Psi_\rho/(1+s)$ instead of $\Psi_\rho$. The more general settings of the commutator expansions are referred to \cite[Lemma C.3.1]{DeGe} and \cite[Definition 4.11]{Iso}.

\begin{Lem}\label{lem1}
Suppose $0<\rho<1$ and put $\Phi_\rho(s)=\Psi_\rho(s)/(1+s)$. For a smooth function $\chi=\chi(x)$ such that its all derivatives are bounded, the commutator $[\Psi_\rho(|D|^2),\chi]$ has the expansions
\begin{gather}
\left[\Psi_\rho\left(|D|^2\right),\chi\right]=\left[|D|^2,\chi\right]\Psi'_\rho\left(|D|^2\right)\nonumber\\
+\ \frac{1}{2\pi{\rm i}}\int_\mathbb{C}\bigl(\bar{\partial_z}\tilde{\Psi}_\rho\bigr)(z)\left(z-|D|^2\right)^{-1}\left[|D|^2,\left[|D|^2,\chi\right]\right]\left(z-|D|^2\right)^{-2}{\rm d}z\wedge{\rm d}\bar{z},\label{commutator_left}
\end{gather}
and
\begin{gather}
\left[\Psi_\rho\left(|D|^2\right),\chi\right]=\Psi'_\rho\left(|D|^2\right)\left[|D|^2,\chi\right]\nonumber\\
-\ \frac{1}{2\pi{\rm i}}\int_\mathbb{C}\bigl(\bar{\partial_z}\tilde{\Psi}_\rho\bigr)(z)\left(z-|D|^2\right)^{-2}\left[|D|^2,\left[|D|^2,\chi\right]\right]\left(z-|D|^2\right)^{-1}dz\wedge d\bar{z},\label{commutator_right}
\end{gather}
where ${\rm d}z\wedge{\rm d}\bar{z}=-2{\rm i}{\rm d}{{\rm Re}z}\wedge{\rm d}{{\rm Im}z}$ is the two-dimensional Lebesgue measure and $\Psi'_\rho$ denotes ${\rm d}\Psi_\rho/{\rm d}s$.
\end{Lem}

\begin{Rem}\label{rem4}
Right-hand sides of \eqref{commutator_left} and \eqref{commutator_right} are operators on $H^{2\rho}(\mathbb{R}^n)$ because the integral terms are bounded by \eqref{lem1_1} and
\begin{equation}
\left[|D|^2,\chi\right]=-{\rm i}D\cdot\nabla\chi-{\rm i}\nabla\chi\cdot D=-2{\rm i}D\cdot\nabla\chi+\Delta\chi=-2{\rm i}\nabla\chi\cdot D-\Delta\chi
\end{equation}
holds on $H^2(\mathbb{R}^n)$.
\end{Rem}

\begin{proof}[Proof of Lemma \ref{lem1}]
We prove the formula \eqref{commutator_left} only. We first note that
\begin{equation}
\int_\mathbb{C}\bigl|\bar{\partial_z}\tilde{\Psi}_\rho(z)\bigr|\left\|\left(z-|D|^2\right)^{-1}\left[|D|^2,\left[|D|^2,\chi\right]\right]\left(z-|D|^2\right)^{-2}\right\|\left|{\rm d}z\wedge{\rm d}\bar{z}\right|<\infty,\label{lem1_1}
\end{equation}
where we denote the operator norm on $L^2(\mathbb{R}^n)$ by $\|\cdot\|$. This is seen as follows. By the basic inequality
\begin{equation}
\sup_{\lambda\in\mathbb{R}}\frac{\langle \lambda\rangle^{q_1}}{|z-\lambda|^{q_2}}\lesssim_{q_1q_2}\frac{\langle z\rangle^{q_1}}{|{\rm Im} z|^{q_2}}\label{q1q2_inequality}
\end{equation}
for $q_2>0$ and $0\leqslant q_1\leqslant q_2$, we have
\begin{equation}
\left\|\left(z-|D|^2\right)^{-1}\left[|D|^2,\left[|D|^2,\chi\right]\right]\left(z-|D|^2\right)^{-2}\right\|\lesssim|{\rm Im}z|^{-3}\langle z\rangle
\end{equation}
for $z\in\mathbb{C}\setminus\mathbb{R}$. Inequality \eqref{q1q2_inequality} will be used often in our proof. Therefore, the left-hand side of \eqref{lem1_1} is bounded because, for $\rho<1$,
\begin{equation}
\int_\mathbb{C}\langle z\rangle^{\rho-3}\left|dz\wedge d\bar{z}\right|<\infty
\end{equation}
by \eqref{almost_analytic_extension} with $N=3$. We now will prove \eqref{commutator_left}. Because
\begin{equation}
\left|\frac{{\rm d}^k}{{\rm d}s^k}\Phi_\rho(s)\right|\lesssim_k\langle s\rangle^{\rho-1-k}\label{lem1_2}
\end{equation}
holds for $k\in\mathbb{N}\cup\{0\}$, an almost analytic extension $\tilde{\Phi}_\rho\in C^\infty(\mathbb{C})$ has the estimate
\begin{equation}
\bigr|\bar{\partial_z}\tilde{\Phi}_\rho(z)\bigl|\lesssim_N|{\rm Im}z|^N\langle z\rangle^{\rho-2-N}\label{lem1_3}
\end{equation}
for any $N\in\mathbb{N}$. According to the Helffer--Sj\"ostrand formula, $\Phi_\rho(|D|^2)$ is expressed as
\begin{equation}
\Phi_\rho\left(|D|^2\right)=\frac{1}{2\pi{\rm i}}\int_\mathbb{C}\bigl(\bar{\partial_z}\tilde{\Phi}_\rho\bigr)(z)\left(z-|D|^2\right)^{-1}{\rm d}z\wedge{\rm d}\bar{z}.
\end{equation}
We therefore compute
\begin{gather}
\left[\Phi_\rho\left(|D|^2\right),\chi\right]=\frac{1}{2\pi{\rm i}}\int_\mathbb{C}\bigl(\bar{\partial_z}\tilde{\Phi}_\rho\bigr)(z)\left(z-|D|^2\right)^{-1}\left[|D|^2,\chi\right]\left(z-|D|^2\right)^{-1}{\rm d}z\wedge{\rm d}\bar{z}\nonumber\\
=\left[|D|^2,\chi\right]\Phi'_\rho\left(|D|^2\right)\nonumber\\
+\ \frac{1}{2\pi{\rm i}}\int_\mathbb{C}\bigl(\bar{\partial_z}\tilde{\Phi}_\rho\bigr)(z)\left(z-|D|^2\right)^{-1}\left[|D|^2,\left[|D|^2,\chi\right]\right]\left(z-|D|^2\right)^{-2}{\rm d}z\wedge{\rm d}\bar{z}.\label{lem1_4}
\end{gather}
Incidentally, from the definition of $\Phi_\rho$,
\begin{equation}
\left[\Phi_\rho\left(|D|^2\right),\chi\right]=\left[\Psi_\rho\left(|D|^2\right),\chi\right]\langle D\rangle^{-2}-\Phi_\rho\left(|D|^2\right)\left[|D|^2,\chi\right]\langle D\rangle^{-2}\label{lem1_5}
\end{equation}
and
\begin{equation}
\Phi'_\rho\left(|D|^2\right)=\Psi'_\rho\left(|D|^2\right)\langle D\rangle^{-2}-\Phi_\rho\left(|D|^2\right)\langle D\rangle^{-2}\label{lem1_6}
\end{equation}
hold. Combining \eqref{lem1_4}, \eqref{lem1_5}, and \eqref{lem1_6}, we have
\begin{gather}
\left[\Psi_\rho\left(|D|^2\right),\chi\right]\langle D\rangle^{-2}=\left[|D|^2,\chi\right]\Psi'_\rho\left(|D|^2\right)\langle D\rangle^{-2}+\left[\Phi_\rho\left(|D|^2\right),\left[|D|^2,\chi\right]\right]\langle D\rangle^{-2}\nonumber\\
+\ \frac{1}{2\pi{\rm i}}\int_\mathbb{C}\bigl(\bar{\partial_z}\tilde{\Phi}_\rho\bigr)(z)\left(z-|D|^2\right)^{-1}\left[|D|^2,\left[|D|^2,\chi\right]\right]\left(z-|D|^2\right)^{-2}{\rm d}z\wedge{\rm d}\bar{z}.\label{lem1_7}
\end{gather}
This equation implies
\begin{gather}
\left[\Psi_\rho\left(|D|^2\right),\chi\right]=\left[|D|^2,\chi\right]\Psi'_\rho\left(|D|^2\right)\nonumber\\
+\ \frac{1}{2\pi{\rm i}}\int_\mathbb{C}\bigl(\bar{\partial_z}\tilde{\Phi}_\rho\bigr)(z)(1+z)\left(z-|D|^2\right)^{-1}\left[|D|^2,\left[|D|^2,\chi\right]\right]\left(z-|D|^2\right)^{-2}{\rm d}z\wedge{\rm d}\bar{z},\label{lem1_8}
\end{gather}
noting that
\begin{gather}
\left[\Phi_\rho\left(|D|^2\right),\left[|D|^2,\chi\right]\right]\nonumber\\
=\frac{1}{2\pi{\rm i}}\int_\mathbb{C}\bigl(\bar{\partial_z}\tilde{\Phi}_\rho\bigr)(z)\left(z-|D|^2\right)^{-1}\left[|D|^2,\left[|D|^2,\chi\right]\right]\left(z-|D|^2\right)^{-1}{\rm d}z\wedge{\rm d}\bar{z}
\end{gather}
by the Helffer--Sj\"ostrand formula and that $\langle D\rangle^2=-(z-|D|^2)+1+z$. Because $\tilde{\Phi}_\rho(z)(1+z)$ corresponds with one of the almost analytic extensions of $\Psi_\rho$, we have
\begin{equation}
\bigl(\bar{\partial_z}\tilde{\Phi}_\rho\bigr)(z)(1+z)=\bar{\partial_z}\left\{\tilde{\Phi}_\rho(z)(1+z)\right\}=\bigl(\bar{\partial_z}\tilde{\Psi}_\rho\bigr)(z).\label{lem1_9}
\end{equation}
\eqref{lem1_8} and \eqref{lem1_9} imply \eqref{commutator_left}.
\end{proof}

We will use the following notations frequently. The Heisenberg derivative of a time-dependent operator $P(t)$ associated with an operator $Q$ is
\begin{equation}
\mathbb{D}_QP(t)=\frac{{\rm d}}{{\rm d}t}P(t)+{\rm i}\left[Q,P(t) \right].
\end{equation}
If $P$ is time-independent, $\mathbb{D}_QP$ is ${\rm i}[Q,P]$. $P(t)=\mathcal{O}(t^\nu)$ means that $P(t)$ is the bounded operator and that $\|P(t)\|\lesssim t^\nu$ for $\nu\in\mathbb{R}$. The Hermitian conjugate ${\rm hc}$ is defined by $Q+{\rm hc}=Q+Q^*$, where $Q^*$ is the formal adjoint of $Q$.

\begin{proof}[Proof of Theorem \ref{the2}]
Let $\chi\in C_0^\infty(\mathbb{R})$ satisfy that $\chi(s)=1$ if $\Theta/2\leqslant s\leqslant 2\theta$ and $\chi(s)=0$ if $s\leqslant\Theta/3$ for $0<\Theta<\theta$, where the size of $\Theta$ is to be determined below. Put $X(s)=\int^s_{-\infty}\chi(\tau)^2{\rm d}\tau$ and
\begin{equation}
\mathscr{L}(t)=f(H_\rho)X\left(\frac{|x|}{2t}\right)f(H_\rho),
\end{equation}
according to \cite[Proposition 4.2.1]{DeGe} and \cite[Theorem 2.31]{Iso}. Clearly, $\mathscr{L}(t)=\mathcal{O}(1)$. We first give the proof for the case where $\rho<1$. Using \eqref{commutator_right}, we compute
\begin{gather}
{\rm i}\left[\Psi_\rho\left(|D|^2\right),X\left(\frac{|x|}{2t}\right) \right]=\Psi'_\rho\left(|D|^2\right)\left\{\frac{1}{2t}D\cdot\frac{x}{|x|}\chi\left(\frac{|x|}{2t}\right)^2+{\rm hc}\right\}+\mathcal{O}\left(t^{-2}\right)\nonumber\\
=\frac{1}{2t}\Psi'_\rho\left(|D|^2\right)D\cdot\frac{x}{|x|}\chi\left(\frac{|x|}{2t}\right)^2+{\rm hc}+\mathcal{O}\left(t^{-2}\right).\label{the2_1}
\end{gather}
We here adopted the estimate
\begin{equation}
\left[\Psi'_\rho\left(|D|^2\right),\chi\left(\frac{|x|}{2t}\right)^2\frac{x}{|x|}\cdot D\right]=\mathcal{O}\left(t^{-1}\right),\label{the2_2}
\end{equation}
using the Helffer--Sj\"ostrand formula directly with
\begin{equation}
\left\|\left(z-|D|^2\right)^{-1}\left[|D|^2,\chi\left(\frac{|x|}{2t}\right)^2\frac{x}{|x|}\cdot D\right]\left(z-|D|^2\right)^{-1}\right\|\lesssim t^{-1}|{\rm Im}z|^{-2}\langle z\rangle
\end{equation}
and $|\bar{\partial}_z\tilde{\Psi}'_\rho(z)|\lesssim|{\rm Im}z|^2\langle z\rangle^{\rho-4}$. Therefore, from \eqref{the2_1}, we have
\begin{gather}
\mathbb{D}_{\Psi_\rho(|D|^2)}X\left(\frac{|x|}{2t}\right)=-\frac{|x|}{2t^2}\chi\left(\frac{|x|}{2t}\right)^2\nonumber\\
+\ \frac{1}{2t}\Psi'_\rho\left(|D|^2\right)D\cdot\frac{x}{|x|}\chi\left(\frac{|x|}{2t}\right)^2+{\rm hc}+\mathcal{O}\left(t^{-2}\right).\label{heisenberg1}
\end{gather}
We take $g\in C_0^\infty(\mathbb{R})$ such that $f=fg$ and compute
\begin{gather}
f(H_\rho)\Psi'_\rho\left(|D|^2\right)D\cdot\frac{x}{|x|}\chi\left(\frac{|x|}{2t}\right)^2f(H_\rho)\nonumber\\
=f(H_\rho)\chi\left(\frac{|x|}{2t}\right)g(H_\rho)\Psi'_\rho\left(|D|^2\right)D\cdot\frac{x}{|x|}\chi\left(\frac{|x|}{2t}\right)f(H_\rho)+I_1(t)+I_2(t)\label{the2_3}.
\end{gather}
We defined $I_1$ and $I_2$ in \eqref{the2_3} by
\begin{align}
I_1(t)&=f(H_\rho)\sum_{j=1}^n\left[\Psi'_\rho\left(|D|^2\right)D_j,\chi\left(\frac{|x|}{2t}\right)\right]\frac{x_j}{|x|}\chi\left(\frac{|x|}{2t}\right)f(H_\rho),\\
I_2(t)&=f(H_\rho)\left[g(H_\rho),\chi\left(\frac{|x|}{2t}\right)\right]\Psi'_\rho\left(|D|^2\right)D\cdot\frac{x}{|x|}\chi\left(\frac{|x|}{2t}\right)f(H_\rho),
\end{align}
where $D_j$ is the $j$th component of $D$. Making the same computation as \eqref{the2_2} yields
\begin{gather}
\left[\Psi'_\rho\left(|D|^2\right)D_j,\chi\left(\frac{|x|}{2t}\right)\right]\nonumber\\
=\Psi'_\rho\left(|D|^2\right)\mathcal{O}\left(t^{-1}\right)+\left[\Psi'_\rho\left(|D|^2\right),\chi\left(\frac{|x|}{2t}\right)\right]D_j=\mathcal{O}\left(t^{-1}\right)\label{the2_4}
\end{gather}
for $1\leqslant j\leqslant n$ and $I_1(t)=\mathcal{O}(t^{-1})$ holds. We here note that $\langle\Psi_\rho(|D|^2)\rangle\langle H_\rho\rangle^{-1}$ is bounded by virtue of Proposition \ref{prop1} and the Kato--Rellich theorem. We write
\begin{gather}
\Psi'_\rho\left(|D|^2\right)\langle D\rangle\left(z-H_\rho\right)^{-1}\nonumber\\
=\Psi'_\rho\left(|D|^2\right)\langle D\rangle\left\langle\Psi_\rho\left(|D|^2\right)\right\rangle^{-1}\left\langle\Psi_\rho\left(|D|^2\right)\right\rangle\langle H_\rho\rangle^{-1}\langle H_\rho\rangle\left(z-H_\rho\right)^{-1},\label{the2_5}
\end{gather}
and then estimate
\begin{equation}
\left\|\Psi'_\rho\left(|D|^2\right)\langle D\rangle\left(z-H_\rho\right)^{-1}\right\|\lesssim|{\rm Im}z|^{-1}\langle z\rangle,\label{the2_6}
\end{equation}
if $\rho>1/2$. If $0<\rho\leqslant1/2$, the right-hand side of \eqref{the2_6} can be replaced with just $|{\rm Im}z|^{-1}$ because $\Psi'_\rho(|D|^2)\langle D\rangle$ is bounded. By virtue of the Helffer--Sj\"ostrand formula, \eqref{the2_1}, and \eqref{the2_6}, we have
\begin{gather}
\left[g(H_\rho),\chi\left(\frac{|x|}{2t}\right)\right]=\frac{1}{2\pi{\rm i}}\int_\mathbb{C}\left(\bar{\partial_z}\tilde{g}\right)(z)\left(z-H_\rho\right)^{-1}\nonumber\\
\times\left[\Psi_\rho\left(|D|^2\right),\chi\left(\frac{|x|}{2t}\right)\right]\left(z-H_\rho\right)^{-1}{\rm d}z\wedge{\rm d}\bar{z}=\mathcal{O}\left(t^{-1}\right),\label{the2_7}
\end{gather}
noting that an almost analytic extension $\tilde{g}$ is compactly supported in $\mathbb{C}$. Because we know $\Psi'_\rho(|D|^2)D\cdot(x/|x|)\chi(|x|/(2t))f(H_\rho)=\mathcal{O}(1)$ even though $\rho>1/2$ from \eqref{the2_4} and boundedness of $\Psi'_\rho(|D|^2)Df(H_\rho)$, we have $I_2(t)=\mathcal{O}(t^{-1})$. It follows from \eqref{heisenberg1} and \eqref{the2_3} that
\begin{gather}
\mathbb{D}_{H_\rho}\mathscr{L}(t)=f(H_\rho)\left\{\mathbb{D}_{\Psi_\rho(|D|^2)}X\left(\frac{|x|}{2t}\right)\right\}f(H_\rho)\nonumber\\
\leqslant-\frac{1}{t}\left(\frac{\Theta}{3}-C\right)f(H_\rho)\chi\left(\frac{|x|}{2t}\right)^2f(H_\rho)+\mathcal{O}\left(t^{-2}\right),
\end{gather}
where we put $C=\|g(H_\rho)\Psi'_\rho(|D|^2)D\cdot x/|x|\|$ and choose $\Theta$ such that $\Theta/3-C>0$. This implies \eqref{maximal_velocity_bound} (by \cite[Lemma B.4.1]{DeGe} for example). The proof in the case where $\rho=1$ is given by simply replacing $\Psi'_\rho$ with $1$ in the proof above (see also \cite[Proposition 4.2.1]{DeGe} or \cite[Theorem 2.31]{Iso}). In particular, the commutator calculation is simpler than that of $\rho<1$ because $\Psi_1(|D|^2)=|D|^2$.
\end{proof}

\section{Middle velocity bound}\label{section_middle_velocity_bound}
In Sections \ref{section_middle_velocity_bound} and \ref{section_minimal_velocity_bound}, we assume that the potential function $V$ satisfies Assumption \ref{ass1}. In this section, we focus on proving Theorem \ref{the3} that is the propagation estimate in the mid-range velocity region. This estimate is needed for the proof of Theorem \ref{the1}. The corresponding propagation estimate for the standard two-body Sch\"odinger operator is given in \cite[Proposition 4.4.3]{DeGe} and \cite[Theorem 2.36]{Iso}. To withdraw the time decay in the middle region, we have to add the factor $\Psi'_\rho(|D|^2)D-x/(2t)$ that comes from the Hamilton canonical equation $\nabla_DH_\rho=dx/dt$. Thus, $\Psi'_\rho(|D|^2)D$ and $x/(2t)$ are close asymptotically. In this context, the propagation estimate with regard to the solution of the Hamilton--Jacobi equation was investigated by \cite{Si}.

\begin{The}\label{the3}
\textbf{Middle velocity bound.} For any $0<\theta_1<\theta_2$ and $f\in C_0^\infty(\mathbb{R})$, the inequality
\begin{equation}
\int_1^\infty\left\|F\left(\theta_1\leqslant\frac{|x|}{2t}\leqslant\theta_2\right)\left\{\Psi'_\rho\left(|D|^2\right)D-\frac{x}{2t} \right\}f(H_\rho)e^{-{\rm i}tH_\rho}\phi\right\|_{L^2(\mathbb{R}^n)}^2\frac{{\rm d}t}{t}\lesssim\|\phi\|_{L^2(\mathbb{R}^n)}^2\label{middle_velocity_bound}
\end{equation}
holds for $\phi\in L^2(\mathbb{R}^n)$.
\end{The}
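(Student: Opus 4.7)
The plan is a standard propagation-observable argument, adapted to the fractional setting via the commutator expansions of Lemma \ref{lem1}. Write $b(t)=\Psi'_\rho(|D|^2)D-x/(2t)$ for the mismatch between the quantum and classical velocities, pick $F_1\in C_0^\infty((\theta_1/2,2\theta_2))$ with $F_1\geqslant0$ and $F_1\equiv1$ on $[\theta_1,\theta_2]$, and take as propagation observable
\[
\mathscr{M}(t)=f(H_\rho)\,b(t)\,F_1(|x|/(2t))\,b(t)\,f(H_\rho).
\]
Self-adjointness of $b(t)$ together with $F_1\geqslant0$ gives $\mathscr{M}(t)\geqslant0$. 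Since $F_1$ enforces $|x|/(2t)\leqslant2\theta_2$, the factor $(x/(2t))F_1^{1/2}(|x|/(2t))=\mathcal{O}(1)$, and combined with $\Psi'_\rho(|D|^2)Df(H_\rho)=\mathcal{O}(1)$ (Proposition \ref{prop1} and \eqref{the2_5}--\eqref{the2_6}) yields $\mathscr{M}(t)=\mathcal{O}(1)$ uniformly in $t\geqslant1$. The operator inequality $F(\theta_1\leqslant|x|/(2t)\leqslant\theta_2)^2\leqslant F_1(|x|/(2t))$ then gives the domination $\|F(\theta_1\leqslant|x|/(2t)\leqslant\theta_2)b(t)f(H_\rho)\phi_t\|^2\leqslant\langle\phi_t,\mathscr{M}(t)\phi_t\rangle$, with $\phi_t=e^{-{\rm i}tH_\rho}\phi$.

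Next I would compute the Heisenberg derivative of $b(t)$ using $\partial_t(x/(2t))=-x/(2t^2)$, the exact identity ${\rm i}[\Psi_\rho(|D|^2),x_j]=2\Psi'_\rho(|D|^2)D_j$ (which is Lemma \ref{lem1} with vanishing remainder, since $[[|D|^2,x_j],|D|^2]=0$), and $[\Psi_\rho(|D|^2),\Psi'_\rho(|D|^2)D]=0$:
\[
\mathbb{D}_{H_\rho}b(t)=-\frac{b(t)}{t}+{\rm i}[V,\Psi'_\rho(|D|^2)D].
\]
A Leibniz application then gives $\mathbb{D}_{H_\rho}\mathscr{M}(t)=-(2/t)\mathscr{M}(t)+\mathcal{R}(t)$, where the remainder $\mathcal{R}(t)$ consists of (a) $f(H_\rho)\,b(t)\,[\mathbb{D}_{H_\rho}F_1(|x|/(2t))]\,b(t)\,f(H_\rho)$ and (b) the symmetric cross-term containing ${\rm i}[V,\Psi'_\rho(|D|^2)D]$. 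Because $\mathscr{M}(t)$ is bounded, integrating the scalar inequality $\frac{d}{dt}\langle\mathscr{M}(t)\rangle_{\phi_t}\leqslant-(2/t)\langle\mathscr{M}(t)\rangle_{\phi_t}+\langle\mathcal{R}(t)\rangle_{\phi_t}$ over $[1,T]$ reduces the theorem to the bound $\int_1^\infty|\langle\mathcal{R}(t)\rangle_{\phi_t}|\,dt\lesssim\|\phi\|^2$, via \cite{DeGe} Lemma B.4.1.

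For piece (a), Lemma \ref{lem1} converts $[\Psi_\rho(|D|^2),F_1(|x|/(2t))]$ into an $\mathcal{O}(t^{-1})$ operator spatially localized on $\supp F_1'\subset[\theta_1/2,\theta_1]\cup[\theta_2,2\theta_2]$, exactly as in the derivation \eqref{the2_1}--\eqref{heisenberg1}. The outer-edge contribution is absorbed by the maximal velocity bound Theorem \ref{the2} after noting that $b(t)f(H_\rho)$ is uniformly bounded when composed with the sharp cutoff $F(|x|/(2t)\leqslant2\theta_2)$. The inner edge is handled by choosing $F_1$ monotone on $[\theta_1/2,\theta_1]$ so that $F_1'\geqslant0$ there: the resulting contribution then has the same sign as $-(2/t)\mathscr{M}(t)$ after a Cauchy--Schwarz step on the $b(t)\cdot x/|x|$ factor, in the spirit of the Graf-type monotonicity argument of \cite{DeGe} Proposition 4.4.3. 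For piece (b), Cauchy--Schwarz splits the cross-term as $\epsilon t^{-1}\langle\phi_t,\mathscr{M}(t)\phi_t\rangle+\epsilon^{-1}t\|[V,\Psi'_\rho(|D|^2)D]F_1^{1/2}(|x|/(2t))f(H_\rho)\phi_t\|^2$, and the first summand is absorbed into $-(2/t)\mathscr{M}(t)$ for small $\epsilon$.

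The main technical obstacle is producing integrable temporal decay for $[V,\Psi'_\rho(|D|^2)D]F_1^{1/2}(|x|/(2t))$. Split $[V,\Psi'_\rho(|D|^2)D]=[V,\Psi'_\rho(|D|^2)]D+\Psi'_\rho(|D|^2)[V,D]$ with $[V,D]={\rm i}\nabla V$. For the long-range part, since only $V_{\rm long}\in C^1$ is assumed under Assumption \ref{ass1}, a second-order Lemma \ref{lem1}-style expansion would require the unavailable $\nabla^2 V_{\rm long}$; the remedy is to apply the Helffer--Sj\"ostrand formula directly to $\Psi'_\rho$, admissible because $|\Psi'_\rho(s)|\lesssim\langle s\rangle^{\rho-1}$ is bounded and its almost analytic extension satisfies $|\bar{\partial_z}\widetilde{\Psi'_\rho}(z)|\lesssim_N|{\rm Im}\,z|^N\langle z\rangle^{\rho-1-N}$:
\[
[\Psi'_\rho(|D|^2),V_{\rm long}]=\frac{1}{2\pi{\rm i}}\int_\mathbb{C}\bigl(\bar{\partial_z}\widetilde{\Psi'_\rho}\bigr)(z)(z-|D|^2)^{-1}[|D|^2,V_{\rm long}](z-|D|^2)^{-1}\,dz\wedge d\bar{z},
\]
which involves only $\nabla V_{\rm long}$ through $[|D|^2,V_{\rm long}]=-{\rm i}D\cdot\nabla V_{\rm long}-{\rm i}\nabla V_{\rm long}\cdot D$. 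The resulting operator inherits spatial decay $\langle x\rangle^{-\gamma_{\rm long}-1}$, which on $\supp F_1(|x|/(2t))$ (where $|x|\geqslant\theta_1 t$) becomes temporal decay $t^{-\gamma_{\rm long}-1}$, integrable since $\gamma_{\rm long}>0$. The short-range and singular contributions receive analogous Helffer--Sj\"ostrand treatment, producing $t^{-\gamma_{\rm short}}$ and $t^{-\gamma_{\rm sing}}$ decay (integrable by $\gamma_{\rm short},\gamma_{\rm sing}>1$), with Proposition \ref{prop1} supplying the $\Psi_\rho(|D|^2)$-relative bound required to dominate the $V_{\rm sing}$ contribution through the energy cutoff $f(H_\rho)$.
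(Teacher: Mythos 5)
Your overall scheme---a nonnegative propagation observable whose Heisenberg derivative contains $-(2/t)\mathscr{M}(t)$ plus a remainder to be shown integrable---is sound in outline, and your computation $\mathbb{D}_{H_\rho}b(t)=-b(t)/t+{\rm i}[V,\Psi'_\rho(|D|^2)D]$ with $b(t)=\Psi'_\rho(|D|^2)D-x/(2t)$ is correct. The treatment of the potential cross-term (b) and of the outer edge of $\supp F_1'$ can be made to work, the latter only after you enlarge $\theta_2$ so that $[\theta_2,2\theta_2]$ lies inside the region $\{|x|/(2t)\geqslant\Theta\}$ covered by Theorem \ref{the2} (harmless, since the claimed estimate for a larger $\theta_2$ implies it for the original one; the paper makes this reduction explicitly).

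The genuine gap is at the \emph{inner} edge of $\supp F_1$. Since $F_1$ must vanish near the origin (you need $|x|\gtrsim\theta_1 t$ on $\supp F_1$ to convert the spatial decay of $V$ into time decay in piece (b)), the term $f(H_\rho)b(t)\{\mathbb{D}_{H_\rho}F_1(|x|/(2t))\}b(t)f(H_\rho)$ contains, by the analogue of \eqref{heisenberg1},
\begin{equation*}
\frac{1}{2t}\,f(H_\rho)\,b(t)\left\{b(t)\cdot\frac{x}{|x|}F_1'\left(\frac{|x|}{2t}\right)+{\rm hc}\right\}b(t)\,f(H_\rho)
\end{equation*}
localized on $\theta_1/2\leqslant|x|/(2t)\leqslant\theta_1$. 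This expression is \emph{cubic} in $b(t)$, and choosing $F_1$ monotone so that $F_1'\geqslant0$ there does not give it a sign: an operator of the form $b(b\cdot\hat{x}\,w+{\rm hc})b$ with $w\geqslant0$ is a sign-indefinite cubic form (already classically, $|b|^2(b\cdot\hat{x})w$ changes sign with $b\cdot\hat{x}$), so it cannot be absorbed into the quadratic negative term $-(2/t)\mathscr{M}(t)$. A Cauchy--Schwarz estimate instead produces $t^{-1}\|\chi_{\rm in}(|x|/(2t))b(t)f(H_\rho)\phi_t\|^2$ with $\chi_{\rm in}$ supported in $[\theta_1/2,\theta_1]$, i.e.\ exactly the quantity you are trying to prove, on a region \emph{below} $\theta_1$ that is covered neither by Theorem \ref{the2} (which only reaches down to $\Theta$, while $\theta_1$ may be arbitrarily small) nor by the not-yet-available minimal velocity bound. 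The argument is therefore circular at the inner edge. This is precisely the obstruction that the paper's proof (following Graf and Derezi\'nski--G\'erard) is designed to avoid: it uses the observable $\mathscr{M}(t)=\frac12 b(t)\cdot(\nabla R)(x/(2t))+{\rm hc}+R(x/(2t))$, \emph{linear} in $b(t)$, with $R=r(|x|^2)$ convex in the sense $r',r''\geqslant0$ and constant near the origin. Positivity then comes from the globally nonnegative Hessian via \eqref{heisenberg4} and \eqref{the3_4}, so no inner boundary term ever appears, while $\nabla R$ still vanishes near the origin so that $[V,\mathscr{M}(t)]$ acquires the needed time decay. To repair your proof you would have to replace the quadratic observable $b\,F_1\,b$ by this convex-weight construction (or an equivalent device); as written, the inner-edge term defeats the argument.
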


We provide preparations before proving Theorem \ref{the3}. Let $r\in C^\infty(\mathbb{R})$ satisfy that $r(s)=\theta^2/4$ if $s<\theta^2/4$ and $r(s)=s/2$ if $s>\theta^2$ for $0<\theta<\theta_1$, and that $r',r''\geqslant0$ where $r''={\rm d}^2r/{\rm d}s^2$. Putting $R(x)=r(|x|^2)$, we have $R(x)=\theta^2/4$ if $|x|<\theta/2$ and $R(x)=|x|^2/2$ if $|x|>\theta$ holds. We also note that
\begin{equation}
y\cdot\left(\nabla^2R\right)(x)y=4r''\left(|x|^2\right)(x\cdot y)^2+2r'\left(|x|^2\right)|y|^2\geqslant0\label{hessian_matrix}
\end{equation}
holds for any $y\in\mathbb{R}^n$, where $\nabla^2R$ is the Hessian matrix of $R$. The original idea of this function $R$ comes from \cite{DeGe} and \cite{Iso}. We set $\mathscr{M}(t)$ such that
\begin{equation}
\mathscr{M}(t)=\frac{1}{2}\left\{\Psi'_\rho\left(|D|^2\right)D-\frac{x}{2t} \right\}\cdot\left(\nabla R\right)\left(\frac{x}{2t}\right)+{\rm hc}+R\left(\frac{x}{2t}\right).
\end{equation}
We first suppose that $\rho<1$ and the case where $\rho=1$ is given the end of the proof of Theorem \ref{the3}.

\begin{Lem}\label{lem2}
Under the notations above,
\begin{gather}
\mathbb{D}_{\Psi_\rho(|D|^2)}\mathscr{M}(t)\hspace{100mm}\nonumber\\
=\frac{1}{t}\left\{\Psi'_\rho\left(|D|^2\right)D-\frac{x}{2t}\right\}\cdot\left(\nabla^2R\right)\left(\frac{x}{2t}\right)\left\{\Psi'_\rho\left(|D|^2\right)D-\frac{x}{2t}\right\}+\mathcal{O}\left(t^{-2}\right)\qquad\label{heisenberg2}
\end{gather}
holds.
\end{Lem}

\begin{proof}[Proof of Lemm \ref{lem2}]
 In this proof, we use the following commutator notations
\begin{equation}
{\rm ad}_2\left[P,Q\right]=\left[P,\left[P,Q\right]\right],\quad{\rm ad}_3\left[P,Q\right]=\left[P,\left[P,\left[P,Q\right]\right]\right]
\end{equation}
for the operators $P$ and $Q$. By the same computation with \eqref{heisenberg1}, we have
\begin{equation}
\mathbb{D}_{\Psi_\rho(|D|^2)}R\left(\frac{x}{2t}\right)=\frac{1}{2t}\left\{\Psi'_\rho\left(|D|^2\right)D-\frac{x}{2t}\right\}\cdot\left(\nabla R\right)\left(\frac{x}{2t}\right)+{\rm hc}+\mathcal{O}\left(t^{-2}\right).\label{heisenberg3}
\end{equation}
It follows from \eqref{heisenberg3} and
\begin{equation}
\mathbb{D}_{\Psi_\rho(|D|^2)}\frac{x}{2t}=\frac{1}{t}\left\{\Psi'_\rho\left(|D|^2\right)D-\frac{x}{2t}\right\},
\end{equation}
that
\begin{equation}
\mathbb{D}_{\Psi_\rho(|D|^2)}\mathscr{M}(t)=\frac{1}{2}\left\{\Psi'_\rho\left(|D|^2\right)D-\frac{x}{2t}\right\}\cdot\mathbb{D}_{\Psi_\rho(|D|^2)}\left(\nabla R\right)\left(\frac{x}{2t}\right)+{\rm hc}+\mathcal{O}\left(t^{-2}\right).\label{heisenberg4}
\end{equation}
Using \eqref{commutator_left}, we compute, for $1\leqslant j\leqslant n$,
\begin{equation}
{\rm i}\left[\Psi_\rho\left(|D|^2\right),\left(\partial_jR\right)\left(\frac{x}{2t}\right)\right]=\frac{1}{t}\left(\nabla\partial_jR\right)\left(\frac{x}{2t}\right)\cdot\Psi'_\rho\left(|D|^2\right)D+B_j^{\rm L}(t)+\Gamma_j^{\rm L}(t),\label{lem2_1}
\end{equation}
where
\begin{gather}
B_j^{\rm L}(t)=-\frac{\rm i}{4t^2}\left(\Delta\partial_jR\right)\left(\frac{x}{2t}\right)\Psi'_\rho\left(|D|^2\right),\\
\Gamma_j^{\rm L}(t)=\frac{1}{2\pi}\int_\mathbb{C}\bigl(\bar{\partial_z}\tilde{\Psi}_\rho\bigr)(z)\left(z-|D|^2\right)^{-1}\nonumber\\
\times\ {\rm ad}_2\left[|D|^2,\left(\partial_jR\right)\left(\frac{x}{2t}\right)\right]\left(z-|D|^2\right)^{-2}{\rm d}z\wedge{\rm d}\bar{z}.
\end{gather}
Obviously, $B_j^{\rm L}(t)=\mathcal{O}(t^{-2})$ and $\Gamma_j^{\rm L}(t)=\mathcal{O}(t^{-2})$ hold. At the same time, using \eqref{commutator_right}, we obtain another expression
\begin{equation}
{\rm i}\left[\Psi_\rho\left(|D|^2\right),\left(\partial_jR\right)\left(\frac{x}{2t}\right)\right]=\frac{1}{t}\Psi'_\rho\left(|D|^2\right)D\cdot\left(\nabla\partial_jR\right)\left(\frac{x}{2t}\right)-B_j^{\rm R}(t)-\Gamma_j^{\rm R}(t),\label{lem2_2}
\end{equation}
where $B_j^{\rm R}(t)=-B_j^{\rm L}(t)^*$ and
\begin{gather}
\Gamma_j^{\rm R}(t)=\frac{1}{2\pi}\int_\mathbb{C}\bigl(\bar{\partial_z}\tilde{\Psi}_\rho\bigr)(z)\left(z-|D|^2\right)^{-2}\nonumber\\
\times\ {\rm ad}_2\left[|D|^2,\left(\partial_jR\right)\left(\frac{x}{2t}\right)\right]\left(z-|D|^2\right)^{-1}{\rm d}z\wedge{\rm d}\bar{z}.
\end{gather}
with $\Gamma_j^{\rm R}(t)=\mathcal{O}(t^{-2})$. Combining \eqref{heisenberg4}, \eqref{lem2_1}, and \eqref{lem2_2}, we have
\begin{gather}
\mathbb{D}_{\Psi_\rho(|D|^2)}\mathscr{M}(t)=\frac{1}{t}\left\{\Psi'_\rho\left(|D|^2\right)D-\frac{x}{2t}\right\}\cdot\left(\nabla^2R\right)\left(\frac{x}{2t}\right)\left\{\Psi'_\rho\left(|D|^2\right)D-\frac{x}{2t}\right\}\nonumber\\
+\ \frac{1}{2}\left\{\Psi'_\rho\left(|D|^2\right)D-\frac{x}{2t}\right\}\cdot\left\{B^{\rm L}(t)+\Gamma^{\rm L}(t)\right\}\nonumber\\
-\ \frac{1}{2}\left\{B^{\rm R}(t)+\Gamma^{\rm R}(t)\right\}\cdot\left\{\Psi'_\rho\left(|D|^2\right)D-\frac{x}{2t}\right\}+\mathcal{O}\left(t^{-2}\right).\label{heisenberg5}
\end{gather}
We here defined $B^{\rm L}(t)=(B_1^{\rm L}(t),\ldots,B_n^{\rm L}(t))$. $B^{\rm R}$, $\Gamma^{\rm L}$, and $\Gamma^{\rm R}$ have the same definitions. It is clear that
\begin{equation}
\frac{x}{t}\cdot B^{\rm L}(t)=\mathcal{O}\left(t^{-2}\right)\label{lem2_3}
\end{equation}
and that
\begin{gather}
\Psi'_\rho\left(|D|^2\right)D\cdot B^{\rm L}(t)-B^{\rm R}(t)\cdot\Psi'_\rho\left(|D|^2\right)D\nonumber\\
=-\frac{1}{8t^3}\Psi'_\rho\left(|D|^2\right)\left(\Delta^2R\right)\left(\frac{x}{2t}\right)\Psi'_\rho\left(|D|^2\right)=\mathcal{O}\left(t^{-3}\right).\label{lem2_4}
\end{gather}
By calculating the commutator $x_j/t$ and $(z-|D|^2)^{-1}$, we have
\begin{gather}
\left\|\frac{x_j}{t}\left(z-|D|^2\right)^{-1}{\rm ad}_2\left[|D|^2,\left(\partial_jR\right)\left(\frac{x}{2t}\right)\right]\left(z-|D|^2\right)^{-2}\right\|\nonumber\\
\lesssim t^{-2}|{\rm Im}z|^{-3}\langle z\rangle+t^{-3}|{\rm Im}z|^{-4}\langle z\rangle^{3/2}.
\end{gather}
This implies that
\begin{equation}
\frac{x}{t}\cdot \Gamma^{\rm L}(t)=\mathcal{O}\left(t^{-2}\right).\label{lem2_5}
\end{equation}
In the same way, we have
\begin{equation}
\Gamma^{\rm R}(t)\cdot\frac{x}{t}=\mathcal{O}\left(t^{-2}\right).\label{lem2_6}
\end{equation}
If $\rho\leqslant1/2$, clearly
\begin{equation}
\Psi'_\rho\left(|D|^2\right)D\cdot\Gamma^{\rm L}(t)=\mathcal{O}\left(t^{-2}\right)\label{lem2_7}
\end{equation}
because $\Psi'_\rho\left(|D|^2\right)\langle D\rangle$ is bounded. Moreover, even in the case of $1/2<\rho<3/4$, \eqref{lem2_7} holds because
\begin{gather}
\left\|\Psi'_\rho\left(|D|^2\right)\langle D\rangle\left(z-|D|^2\right)^{-1}{\rm ad}_2\left[|D|^2,\left(\partial_jR\right)\left(\frac{x}{2t}\right)\right]\left(z-|D|^2\right)^{-2}\right\|\nonumber\\
\lesssim t^{-2}|{\rm Im}z|^{-3}\langle z\rangle^{\rho+1/2}.
\end{gather}
Similarly, for $\rho<3/4$, we have
\begin{equation}
\Gamma^{\rm R}(t)\cdot\Psi'_\rho\left(|D|^2\right)D=\mathcal{O}\left(t^{-2}\right).\label{lem2_8}
\end{equation}
However, instead of \eqref{lem2_7} and \eqref{lem2_8}, we can have the shaper estimate
\begin{equation}
\Psi'_\rho\left(|D|^2\right)D\cdot\Gamma^{\rm L}(t)-\Gamma^{\rm R}(t)\cdot\Psi'_\rho\left(|D|^2\right)D=\mathcal{O}\left(t^{-3}\right)\label{lem2_9}
\end{equation}
for all $0<\rho<1$ as follows. From the definitions $\Gamma^{\rm L}$ and $\Gamma^{\rm R}$, we denote
\begin{gather}
\Psi'_\rho\left(|D|^2\right)D\cdot \Gamma^{\rm L}(t)-\Gamma^{\rm R}(t)\cdot\Psi'_\rho\left(|D|^2\right)D\nonumber\\
=\frac{1}{2\pi}\int_\mathbb{C}\bigl(\bar{\partial_z}\tilde{\Psi}_\rho\bigr)(z)\left(z-|D|^2\right)^{-1}\sum_{j=1}^nZ_{j,z}(t)\left(z-|D|^2\right)^{-1}{\rm d}z\wedge{\rm d}\bar{z}.
\end{gather}
We here put
\begin{gather}
Z_{j,z}(t)=\Psi'_\rho\left(|D|^2\right)D_j{\rm ad}_2\left[|D|^2,\left(\partial_jR\right)\left(\frac{x}{2t}\right)\right]\left(z-|D|^2\right)^{-1}\nonumber\\
-\left(z-|D|^2\right)^{-1}{\rm ad}_2\left[|D|^2,\left(\partial_jR\right)\left(\frac{x}{2t}\right)\right]\Psi'_\rho\left(|D|^2\right)D_j.
\end{gather}
We further put $Z_{1j,z}$ and $Z_{2j,z}$ by $Z_{j,z}=Z_{1j,z}+Z_{2j,z}$ such that
\begin{align}
Z_{1j,z}(t)&=-\left(z-|D|^2\right)^{-1}{\rm ad}_3\left[|D|^2,\left(\partial_jR\right)\left(\frac{x}{2t}\right)\right]\left(z-|D|^2\right)^{-1}\Psi'_\rho\left(|D|^2\right)D_j,\\
Z_{2j,z}(t)&=\left[\Psi'_\rho\left(|D|^2\right)D_j,{\rm ad}_2\left[|D|^2,\left(\partial_jR\right)\left(\frac{x}{2t}\right)\right]\right]\left(z-|D|^2\right)^{-1}.
\end{align}
We have, from the direct calculation of the commutator,
\begin{equation}
\|\left(z-|D|^2\right)^{-1}Z_{1j,z}(t)\left(z-|D|^2\right)^{-1}\|\lesssim t^{-3}|{\rm Im}z|^{-4}\langle z\rangle^{\rho+1}
\end{equation}
and
\begin{equation}
\int_\mathbb{C}\bigl(\bar{\partial_z}\tilde{\Psi}_\rho\bigr)(z)\left(z-|D|^2\right)^{-1}Z_{1j,z}(t)\left(z-|D|^2\right)^{-1}{\rm d}z\wedge{\rm d}\bar{z}=\mathcal{O}\left(t^{-3}\right).\label{lem2_10}
\end{equation}
As for $Z_{2j,z}$, we write
\begin{equation}
Z_{2j,z}(t)=\left\{\Lambda_{1j}(t)+\Lambda_{2j}(t)\right\}\left(z-|D|^2\right)^{-1},
\end{equation}
using the terms
\begin{align}
\Lambda_{1j}(t)&=\Psi'_\rho\left(|D|^2\right)\left[D_j,{\rm ad}_2\left[|D|^2,\left(\partial_jR\right)\left(\frac{x}{2t}\right)\right]\right],\\
\Lambda_{2j}(t)&=\left[\Psi'_\rho\left(|D|^2\right),{\rm ad}_2\left[|D|^2,\left(\partial_jR\right)\left(\frac{x}{2t}\right)\right]\right]D_j.
\end{align}
We compute directly
\begin{equation}
\|\left(z-|D|^2\right)^{-1}\Lambda_{1j}(t)\left(z-|D|^2\right)^{-2}\|\lesssim t^{-3}|{\rm Im}z|^{-3}\langle z\rangle^\rho.\label{lem2_11}
\end{equation}
$\Lambda_{2j}$ is written such that
\begin{gather}
\Lambda_{2j}(t)=\frac{1}{2\pi{\rm i}}\int_\mathbb{C}\bigl(\bar{\partial}_z\tilde{\Psi}'_\rho\bigr)(z)\left(z-|D|^2\right)^{-1}{\rm ad}_3\left[|D|^2,\left(\partial_jR\right)\left(\frac{x}{2t}\right)\right]\nonumber\\
\times\left(z-|D|^2\right)^{-1}D_j{\rm d}z\wedge{\rm d}\bar{z}\label{lem2_12}
\end{gather}
by the Helffer--Sj\"ostrand formula. The commutator above becomes
\begin{gather}
{\rm ad}_3\left[|D|^2,\left(\partial_jR\right)\left(\frac{x}{2t}\right)\right]=\sum_{k=1}^n{\rm ad}_2\left[D_k,{\rm ad}_2\left[|D|^2,\left(\partial_jR\right)\left(\frac{x}{2t}\right)\right]\right]\nonumber\\
+\ 2\sum_{k=1}^n\left[D_k,{\rm ad}_2\left[|D|^2,\left(\partial_jR\right)\left(\frac{x}{2t}\right)\right]\right]D_k.
\end{gather}
Inserting the estimates
\begin{gather}
\left\|\left(z-|D|^2\right)^{-1}{\rm ad}_2\left[D_k,{\rm ad}_2\left[|D|^2,\left(\partial_jR\right)\left(\frac{x}{2t}\right)\right]\right]\left(z-|D|^2\right)^{-1}\right\|\lesssim t^{-4}|{\rm Im}z|^{-2}\langle z\rangle,\\
\left\|\left(z-|D|^2\right)^{-1}\left[D_k,{\rm ad}_2\left[|D|^2,\left(\partial_jR\right)\left(\frac{x}{2t}\right)\right]\right]\left(z-|D|^2\right)^{-1}\right\|\lesssim t^{-3}|{\rm Im}z|^{-2}\langle z\rangle
\end{gather}
into \eqref{lem2_12}, we have
\begin{equation}
\left\|\left(z-|D|^2\right)^{-1}\Lambda_{2j}(t)\left(z-|D|^2\right)^{-2}\right\|\lesssim t^{-4}|{\rm Im}z|^{-3}\langle z\rangle^{1/2}+t^{-3}|{\rm Im}z|^{-3}\langle z\rangle.\label{lem2_13}
\end{equation}
From \eqref{lem2_11} and \eqref{lem2_13}, we estimate
\begin{equation}
\int_\mathbb{C}\bigl(\bar{\partial_z}\tilde{\Psi}_\rho\bigr)(z)\left(z-|D|^2\right)^{-1}Z_{2j,z}(t)\left(z-|D|^2\right)^{-1}{\rm d}z\wedge{\rm d}\bar{z}=\mathcal{O}\left(t^{-3}\right).\label{lem2_14}
\end{equation}
\eqref{lem2_10} and \eqref{lem2_14} imply \eqref{lem2_9}. In summary, \eqref{heisenberg5}, \eqref{lem2_3}, \eqref{lem2_4}, \eqref{lem2_5}, \eqref{lem2_6}, and \eqref{lem2_9} yield \eqref{heisenberg2}.
\end{proof}

\begin{proof}[Proof of Theorem \ref{the3}]
We take $\chi_1\in C^\infty(\mathbb{R}^n)$ such that $\chi_1(s)=1$ if $s<2\theta_2$ and $\chi_1(s)=0$ if $s>3\theta_2$, and we define the observable $\mathscr{L}(t)$ by
\begin{equation}
\mathscr{L}(t)=f(H_\rho)\chi_1\left(\frac{|x|}{2t}\right)\mathscr{M}(t)\chi_1\left(\frac{|x|}{2t}\right)f(H_\rho),
\end{equation}
according to \cite[ Proposition 4.4.3]{DeGe} and \cite[Theorem 2.36]{Iso}. We know $\mathscr{L}(t)=\mathcal{O}(1)$ because \eqref{the2_4} holds for $\chi_1$. We now compute the Heisenberg derivative of $\mathscr{L}(t)$ associated with $H_\rho$,
\begin{equation}
\mathbb{D}_{H_\rho}\mathscr{L}(t)=I_1(t)+I_2(t)+I_3(t),
\end{equation}
where
\begin{align}
I_1(t)&=f(H_\rho)\left\{\mathbb{D}_{\Psi_\rho(|D|^2)}\chi_1\left(\frac{|x|}{2t}\right)\right\}\mathscr{M}(t)\chi_1\left(\frac{|x|}{2t}\right)f(H_\rho)+{\rm hc},\\
I_2(t)&=f(H_\rho)\chi_1\left(\frac{|x|}{2t}\right)\left\{\mathbb{D}_{\Psi_\rho(|D|^2)}\mathscr{M}(t)\right\}\chi_1\left(\frac{|x|}{2t}\right)f(H_\rho),\\
I_3(t)&=f(H_\rho)\chi_1\left(\frac{|x|}{2t}\right){\rm i}\left[V,\mathscr{M}(t)\right]\chi_1\left(\frac{|x|}{2t}\right)f(H_\rho).
\end{align}

\noindent
{\bf Estimate for \boldmath{$I_1$}.}\quad The same computation with \eqref{lem2_1} and \eqref{lem2_2} give
\begin{gather}
I_1(t)=\frac{1}{t}f(H_\rho)\left\{\Psi'_\rho\left(|D|^2\right)D-\frac{x}{2t}\right\}\cdot\frac{x}{|x|}\chi'_1\left(\frac{|x|}{2t}\right)\mathscr{M}(t)\chi_1\left(\frac{|x|}{2t}\right)f(H_\rho)\nonumber\\
+\ {\rm hc}+\mathcal{O}\left(t^{-2}\right).
\end{gather}
Let $\chi_2\in C_0^\infty(\mathbb{R})$ such that $\chi_2(s)=1$ if $2\theta_2<s<3\theta_2$ and $\chi_2(s)=0$ if $s<\theta_2$ and $s>4\theta_2$. We see that $\chi_2$ satisfies $\chi'_1=\chi_2^2\chi'_1$. Let $g\in C_0^\infty(\mathbb{R})$ such that $f=fg$. We compute
\begin{gather}
f(H_\rho)\left\{\Psi'_\rho\left(|D|^2\right)D-\frac{x}{2t}\right\}\cdot\frac{x}{|x|}\chi'_1\left(\frac{|x|}{2t}\right)\mathscr{M}(t)\chi_1\left(\frac{|x|}{2t}\right)f(H_\rho)\nonumber\\
=f(H_\rho)\chi_2\left(\frac{|x|}{2t}\right)g(H_\rho)\left\{\Psi'_\rho\left(|D|^2\right)D-\frac{x}{2t}\right\}\cdot\frac{x}{|x|}\chi'_1\left(\frac{|x|}{2t}\right)\nonumber\\
\times\ \mathscr{M}(t)\chi_1\left(\frac{|x|}{2t}\right)\chi_2\left(\frac{|x|}{2t}\right)f(H_\rho)+\mathcal{O}\left(t^{-1}\right).\label{the3_1}
\end{gather}
We here used the commutator estimates \eqref{the2_4} and \eqref{the2_7}. Because $4\rho-2<2\rho$ and $\Psi'_\rho(|D|^2)^2|D|^2g(H_\rho)=\langle D\rangle^{4\rho-4}|D|^2g(H_\rho)$ is bounded, we have
\begin{equation}
I_1(t)=\frac{1}{t}f(H_\rho)\chi_2\left(\frac{|x|}{2t}\right)\mathcal{O}(1)\chi_2\left(\frac{|x|}{2t}\right)f(H_\rho)+\mathcal{O}\left(t^{-2}\right).\label{the3_2}
\end{equation}
If necessary, we can assume that $\theta_2$ is sufficiently large. By virtue of \eqref{the3_2} and Theorem \ref{the2},
\begin{equation}
\int_1^\infty\left|\left(I_1(t)e^{-{\rm i}tH_\rho}\phi,e^{-{\rm i}tH_\rho}\phi\right)_{L^2}\right|dt\lesssim\int_1^\infty\left\|\chi_2\left(\frac{|x|}{2t}\right)f(H_\rho)e^{-{\rm i}tH_\rho}\phi\right\|_{L^2}^2\frac{{\rm d}t}{t}\lesssim\|\phi\|_{L^2}^2\label{the3_3}
\end{equation}
is obtained, where $(\cdot,\cdot)_{L^2}$ is the scalar product of $L^2(\mathbb{R}^n)$.\\
\noindent
{\bf Estimate for \boldmath{$I_2$}.}\quad We take $\chi\in C_0^\infty(\mathbb{R})$ such that $\chi(s)=1$ if $\theta_1\leqslant s\leqslant\theta_2$ and $\chi(s)=0$ if $s<(\theta_1+\theta)/2$ and $s>\theta_2+(\theta_1-\theta)/2$. Noting that $(\nabla^2R)(x)={\rm Id}$, which is the identity matrix if $|x|\geqslant(\theta_1+\theta)/2$, and that $\nabla^2R$ is non-negative from \eqref{hessian_matrix}, we have
\begin{gather}
\left(\nabla^2R\right)\left(\frac{x}{2t}\right)=\chi\left(\frac{|x|}{2t}\right)\left(\nabla^2R\right)\left(\frac{x}{2t}\right)\chi\left(\frac{|x|}{2t}\right)\hspace{30mm}\nonumber\\
+\ \sqrt{1-\chi\left(\frac{|x|}{2t}\right)^2}\left(\nabla^2R\right)\left(\frac{x}{2t}\right)\sqrt{1-\chi\left(\frac{|x|}{2t}\right)^2}\geqslant\chi\left(\frac{|x|}{2t}\right)^2{\rm Id}.\label{the3_4}
\end{gather}
Using \eqref{the2_4}, \eqref{heisenberg2}, \eqref{the3_4}, and $\chi_1\chi=\chi$, $I_2$ is estimated as
\begin{gather}
I_2(t)\geqslant\frac{1}{t}f(H_\rho)\left\{\Psi'_\rho\left(|D|^2\right)D-\frac{x}{2t}\right\}\nonumber\\
\cdot\chi\left(\frac{|x|}{2t}\right)^2\left\{\Psi'_\rho\left(|D|^2\right)D-\frac{x}{2t}\right\}f(H_\rho)+\mathcal{O}\left(t^{-2}\right).\label{the3_5}
\end{gather}

\noindent
{\bf Estimate for \boldmath{$I_3$}.}\quad It follows that $(\nabla V_{\rm long})(x)\cdot(\nabla R)(x/(2t))=\mathcal{O}(t^{-1-\gamma_{\rm long}})$ by the condition \eqref{long_range_decay} because $|x|\geqslant t\theta$ holds on the support of $(\partial_jR)(x/(2t))$ for all $1\leqslant j\leqslant n$. We thus compute
\begin{gather}
\left[V_{\rm long},\Psi'_\rho\left(|D|^2\right)D\cdot\left(\nabla R\right)\left(\frac{x}{2t}\right)\right]\nonumber\\
=\Psi'_\rho\left(|D|^2\right)\mathcal{O}\left(t^{-1-\gamma_{\rm long}}\right)+\left[V_{\rm long},\Psi'_\rho\left(|D|^2\right)\right]\left\{\left(\nabla R\right)\left(\frac{x}{2t}\right)\cdot D-\frac{\rm i}{2t}\left(\Delta R\right)\left(\frac{x}{2t}\right)\right\}.\label{the3_6}
\end{gather}
To apply the Helffer--Sj\"ostrand formula, we compute
\begin{gather}
\left(z-|D|^2\right)^{-1}\left(\nabla R\right)\left(\frac{x}{2t}\right)\cdot D=\left(\nabla R\right)\left(\frac{x}{2t}\right)\cdot D\left(z-|D|^2\right)^{-1}\nonumber\\
+\left(z-|D|^2\right)^{-1}\left[|D|^2,\left(\nabla R\right)\left(\frac{x}{2t}\right)\cdot D\right]\left(z-|D|^2\right)^{-1}.
\end{gather}
Noting that $[|D|^2,V_{\rm long}]=-{\rm i}D\cdot\nabla V_{\rm long}-{\rm i}\nabla V_{\rm long}\cdot D$, we have the estimate
\begin{gather}
\left\|\left(z-|D|^2\right)^{-1}\left[|D|^2,V_{\rm long}\right]\left(\nabla R\right)\left(\frac{x}{2t}\right)\cdot D\left(z-|D|^2\right)^{-1}\right\|\nonumber\\
\lesssim t^{-1-\gamma_{\rm long}}|{\rm Im}z|^{-2}\langle z\rangle+t^{-2-\gamma_{\rm long}}|{\rm Im}z|^{-2}\langle z\rangle^{1/2},\label{the3_7}
\end{gather}
and, by
\begin{equation}
\left[|D|^2,\left(\nabla R\right)\left(\frac{x}{2t}\right)\cdot D\right]=-\frac{\rm i}{t}\left(\nabla^2 R\right)\left(\frac{x}{2t}\right)D\cdot D-\frac{\rm 1}{4t^2}\left(\nabla\Delta R\right)\left(\frac{x}{2t}\right)\cdot D,\label{the3_8}
\end{equation}
we also have
\begin{gather}
\left\|\left(z-|D|^2\right)^{-1}\left[|D|^2,V_{\rm long}\right]\left(z-|D|^2\right)^{-1}\left[|D|^2,\left(\nabla R\right)\left(\frac{x}{2t}\right)\cdot D\right]\left(z-|D|^2\right)^{-1}\right\|\nonumber\\
\lesssim t^{-2-\gamma_{\rm long}}|{\rm Im}z|^{-3}\langle z\rangle^{3/2}+t^{-3-\gamma_{\rm long}}|{\rm Im}z|^{-3}\langle z\rangle+t^{-2}|{\rm Im}z|^{-4}\langle z\rangle^2+t^{-2}|{\rm Im}z|^{-3}\langle z\rangle.\label{the3_9}
\end{gather}
We here computed the commutator $(z-|D|^2)^{-1}$ and $(\nabla^2R)(x/(2t))D\cdot D$. \eqref{the3_7} and \eqref{the3_9} imply
\begin{equation}
\left[V_{\rm long},\Psi'_\rho\left(|D|^2\right)\right]\left(\nabla R\right)\left(\frac{x}{2t}\right)\cdot D=\mathcal{O}\left(t^{-1-\gamma_{\rm long}}\right)+\mathcal{O}\left(t^{-2}\right)\label{the3_10}
\end{equation}
by the Helffer--Sj\"ostrand formula. By the same computations, we have
\begin{equation}
\left[V_{\rm long},\Psi'_\rho\left(|D|^2\right)\right]\left(\Delta R\right)\left(\frac{x}{2t}\right)=\mathcal{O}\left(t^{-1-\gamma_{\rm long}}\right)+\mathcal{O}\left(t^{-2}\right).\label{the3_11}
\end{equation}
\eqref{the3_6}, \eqref{the3_10}, and \eqref{the3_11} yield
\begin{equation}
\left[V_{\rm long},\mathscr{M}(t)\right]=\mathcal{O}\left(t^{-1-\gamma_{\rm long}}\right)+\mathcal{O}\left(t^{-2}\right).\label{the3_12}
\end{equation}
We put
\begin{equation}
\mathscr{K}(t)=\frac{1}{2}\left\{\Psi'_\rho\left(|D|^2\right)D-\frac{x}{2t} \right\}\cdot\left(\nabla R\right)\left(\frac{x}{2t}\right)+{\rm hc}.\label{the3_13}
\end{equation}
Because we know that $\langle x\rangle^{\gamma_{\rm sing}}V_{\rm sing}\chi_1(x/(2t))f(H_\rho)=\mathcal{O}(1)$ by Proposition \ref{prop1} and \eqref{the2_7} or \eqref{lem2_2}, we write
\begin{gather}
f(H_\rho)\chi_1\left(\frac{|x|}{2t}\right)\left[V_{\rm sing},\mathscr{M}(t)\right]\chi_1\left(\frac{|x|}{2t}\right)f(H_\rho)\nonumber\\
=\mathcal{O}\left(1\right)\langle x\rangle^{-\gamma_{\rm sing}}\mathscr{K}(t)\chi_1\left(\frac{|x|}{2t}\right)f(H_\rho)-f(H_\rho)\chi_1\left(\frac{|x|}{2t}\right)\mathscr{K}(t)\langle x\rangle^{-\gamma_{\rm sing}}\mathcal{O}\left(1\right).\label{the3_14}
\end{gather}
By computing the commutator $\Psi'_\rho(|D|^2)D_j$ and $(\partial_jR)(x/(2t))$, we have 
\begin{gather}
\langle x\rangle^{-\gamma_{\rm sing}}\Psi'_\rho\left(|D|^2\right)D_j\left(\partial_jR\right)\left(\frac{x}{2t}\right)=\mathcal{O}(t^{-\gamma_{\rm sing}})\Psi'_\rho\left(|D|^2\right)D_j\nonumber\\
+\ \mathcal{O}(t^{-1-\gamma_{\rm sing}})+\langle x\rangle^{-\gamma_{\rm sing}}D_j\left[\Psi'_\rho\left(|D|^2\right),\left(\partial_jR\right)\left(\frac{x}{2t}\right)\right].\label{the3_15}
\end{gather}
It follows from
\begin{gather}
\left\|\langle x\rangle^{-\gamma_{\rm sing}}\left(z-|D|^2\right)^{-1}D_j\left[|D|^2,\left(\partial_jR\right)\left(\frac{x}{2t}\right)\right]\left(z-|D|^2\right)^{-1}\right\|\qquad\nonumber\\
\lesssim t^{-1-\gamma_{\rm sing}}|{\rm Im}z|^{-2}\langle z\rangle+t^{-2}|{\rm Im}z|^{-3}\langle z\rangle^{3/2}+t^{-2}|{\rm Im}z|^{-2}\langle z\rangle^{1/2}
\end{gather}
as in \eqref{the3_9} that
\begin{equation}
\langle x\rangle^{-\gamma_{\rm sing}}D_j\left[\Psi'_\rho\left(|D|^2\right),\left(\partial_jR\right)\left(\frac{x}{2t}\right)\right]=\mathcal{O}\left(t^{-2}\right)\label{the3_16}
\end{equation}
by the Helffer--Sj\"ostrand formula again. \eqref{the3_15} and \eqref{the3_16} imply that
\begin{equation}
\langle x\rangle^{-\gamma_{\rm sing}}\mathscr{K}(t)\chi_1\left(\frac{|x|}{2t}\right)f(H_\rho)=\mathcal{O}\left(t^{-\gamma_{\rm sing}}\right)+\mathcal{O}\left(t^{-2}\right)
\end{equation}
and that, from \eqref{the3_14},
\begin{equation}
f(H_\rho)\chi_1\left(\frac{|x|}{2t}\right)\left[V_{\rm sing},\mathscr{M}(t)\right]\chi_1\left(\frac{|x|}{2t}\right)f(H_\rho)=\mathcal{O}\left(t^{-\gamma_{\rm sing}}\right)+\mathcal{O}\left(t^{-2}\right).\label{the3_17}
\end{equation}
We also have
\begin{equation}
f(H_\rho)\chi_1\left(\frac{|x|}{2t}\right)\left[V_{\rm short},\mathscr{M}(t)\right]\chi_1\left(\frac{|x|}{2t}\right)f(H_\rho)=\mathcal{O}\left(t^{-\gamma_{\rm short}}\right)+\mathcal{O}\left(t^{-2}\right)\label{the3_18}
\end{equation}
by replacing $\langle x\rangle^{-\gamma_{\rm sing}}$ with $V_{\rm short}$ in the computations above. By \eqref{the3_12}, \eqref{the3_17}, and \eqref{the3_18}, we have
\begin{equation}
\label{the3_19}
I_3(t)=\mathcal{O}\left(t^{-\min\{\gamma_{\rm sing},\gamma_{\rm short},1+\gamma_{\rm long},2\}}\right).
\end{equation}
\par
We combine \eqref{the3_5} and \eqref{the3_19}. There exists a constant $C>0$ such that
\begin{gather}
\left(\left\{\mathbb{D}_{H_\rho}\mathscr{L}(t)\right\}e^{-{\rm i}tH_\rho}\phi,e^{-{\rm i}tH_\rho}\phi\right)_{L^2}\nonumber\\
\geqslant\frac{1}{t}\left\|\chi\left(\frac{|x|}{2t}\right)\left\{\Psi'_\rho\left(|D|^2\right)D-\frac{x}{2t}\right\}f(H_\rho)e^{-{\rm i}tH_\rho}\phi\right\|_{L^2}^2\nonumber\\
-\left|\left(I_1(t)e^{-{\rm i}tH_\rho}\phi,e^{-{\rm i}tH_\rho}\phi\right)_{L^2}\right|-Ct^{-\min\{\gamma_{\rm sing},\gamma_{\rm short},1+\gamma_{\rm long},2\}}\|\phi\|_{L^2}^2
\end{gather}
holds. This completes our proof for the case $\rho<1$ by virtue of \eqref{the3_3} and $\min\{\gamma_{\rm sing},\gamma_{\rm short},1+\gamma_{\rm long},2\}>1$. In the case where $\rho=1$, the proof is simpler (see \cite[Proposition 4.4.3]{DeGe} or \cite[Theorem 2.36]{Iso}). Indeed, by replacing $\Psi'_\rho$ with $1$, we omit many of the commutator calculations. In particular, \eqref{heisenberg3} holds without the error term $\mathcal{O}(t^{-2})$. We therefore explicitly have \eqref{heisenberg2} replacing $\mathcal{O}(t^{-2})$ with $-(\Delta^2R)(x/(2t))/(16t^3)$.
\end{proof}

\section{Minimal velocity bound}\label{section_minimal_velocity_bound}
This section completes the proof of Theorem \ref{the1}. Before giving the proof, we initially prepare the Mourre estimate of our version in Theorem \ref{the4} and prove the isolatedness and finite multiplicity of $\sigma_{\rm pp}(H_\rho)\setminus\{0\}$ in Corollary \ref{cor1}. When we consider the Mourre estimate, how to choose a conjugate operator is the heart of matter. In our case, we first employ
\begin{gather}
\hat{A}_\rho=\frac{\rho}{2}\left\{\langle D\rangle^{2\rho-2}D\cdot x+x\cdot D\langle D\rangle^{2\rho-2}\right\}\nonumber\\
=\frac{1}{2}\left\{\Psi'_\rho\left(|D|^2\right)D\cdot x+x\cdot D\Psi'_\rho\left(|D|^2\right)\right\}\label{conjugate1}
\end{gather}
motivated with which
\begin{equation}
{\rm i}\left[\Psi_\rho\left(|D|^2\right),\hat{A}_\rho\right]=2\Psi'_\rho\left(|D|^2\right)^2|D|^2\label{positive_commutator}
\end{equation}
holds by a straightforward computation on $C_0^\infty(\mathbb{R}^n)$ and \eqref{positive_commutator} is non-negative. The choice of conjugate operator is not unique. Indeed, if $1/2\leqslant\rho\leqslant1$, we can admit
\begin{equation}
A=\hat{A}_1=\frac{1}{2}\left(D\cdot x+x\cdot D \right)\label{conjugate_dilation}
\end{equation}
(see Remarks \ref{rem5} and \ref{rem6}) that works well for the standard Schr\"odinger operator.

The resolvent of $|D|^2$ was first introduced as the conjugate operator of the Mourre estimate by \cite{Yo} such that
\begin{equation}
\frac{1}{2}\left\{\langle D\rangle^{-2}D\cdot x+x\cdot D\langle D\rangle^{-2}\right\}
\end{equation}
to consider the time-dependent Schr\"odinger operator
\begin{equation}
H(t)=-\Delta+V(t)\label{time_periodic}
\end{equation}
where $V(t)=V(t,x)$ had time-periodicity in $t$. Thereafter, \cite{AdKi} also treated the Hamiltonian \eqref{time_periodic} and introduced the resolvent of $D_t=-{\rm i}{\rm d}/{\rm d}t$ into the conjugate operator to relax the smoothness condition on $V$. Both \cite{AdKi} and \cite{Yo} applied the Howland--Yajima method for the Floquet Hamiltonian $D_t+H(t)$. They estimated the commutators with the Floquet Hamiltonian and the conjugate operator, and their estimates were independent of the fractional operator.

We now begin with the self-adjointness of $\hat{A}_\rho$.

\begin{Prop}\label{prop2}
$\hat{A}_\rho$ is essentially self-adjoint with the core $C_0^\infty(\mathbb{R}^n)$.
\end{Prop}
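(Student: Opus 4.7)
My plan is to prove essential self-adjointness by applying Nelson's commutator theorem (\cite{ReSi}, Theorem X.37) to $A_\rho$ with the reference operator $N=1+|D|^2+|x|^2$, which is essentially self-adjoint on $C_0^\infty(\mathbb{R}^n)$ as the (shifted) harmonic oscillator. Symmetry of $A_\rho$ on $C_0^\infty(\mathbb{R}^n)\subset\mathcal{S}(\mathbb{R}^n)$ is built into the symmetric definition, since $\langle D\rangle^{2\rho-2}$ and $D_j$ are Fourier multipliers preserving $\mathcal{S}(\mathbb{R}^n)$; it is perhaps most transparent on the Fourier side, where $\mathscr{F}A_\rho\mathscr{F}^*$ equals the symmetric first-order differential operator $i(X\cdot\nabla_\xi+\tfrac{1}{2}\,\mathrm{div}_\xi X)$ with smooth real vector field $X(\xi)=\rho^{-1}\langle\xi\rangle^{2\rho-2}\xi$.

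To verify the operator bound $\|A_\rho\phi\|\lesssim\|N\phi\|$, the key point is that $\rho\leqslant 1$ makes the Fourier symbol $\xi_j\langle\xi\rangle^{2\rho-2}$ pointwise dominated by $\langle\xi\rangle^{2\rho-1}\leqslant\langle\xi\rangle$, so $D_j\langle D\rangle^{2\rho-2}$ is continuous from $D(\langle D\rangle)$ into $L^2$. Combining this with the bounded commutator $[\langle D\rangle,x_j]$ (a Fourier multiplier with symbol $-i\xi_j/\langle\xi\rangle$) and the standard harmonic-oscillator estimate $\|x_j\langle D\rangle\phi\|\lesssim\|N\phi\|$, verifiable via the Hermite basis expansion of $N$, one obtains the required estimate.

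For the commutator bound $\pm i[A_\rho,N]\leqslant CN$ in the quadratic-form sense, I would split $[A_\rho,N]=[A_\rho,|D|^2]+[A_\rho,|x|^2]$. The first part is straightforward: since $|D|^2$ commutes with the Fourier multiplier $\langle D\rangle^{2\rho-2}D$, only $[x_j,|D|^2]=2iD_j$ contributes, producing $i[A_\rho,|D|^2]=-\tfrac{2}{\rho}|D|^2\langle D\rangle^{2\rho-2}$, whose form is dominated by $\tfrac{2}{\rho}\langle D\rangle^{2\rho}\leqslant\tfrac{2}{\rho}N$ because $2\rho\leqslant 2$. The second part uses $[D_j,|x|^2]=-2ix_j$ and the residual $[\langle D\rangle^{2\rho-2},|x|^2]$, which on the Fourier side (where $|x|^2\leftrightarrow-\Delta_\xi$) equals $2\nabla_\xi\langle\xi\rangle^{2\rho-2}\cdot\nabla_\xi+\Delta_\xi\langle\xi\rangle^{2\rho-2}$; transferring back, every resulting position-space term is either a bounded Fourier multiplier or a bounded Fourier multiplier combined with exactly one factor of $x$ or $D$, and so is quadratic-form dominated by $CN$.

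The principal technical obstacle is this second commutator, because $|x|^2$ does not commute with the non-polynomial multiplier $\langle D\rangle^{2\rho-2}$, so every correction term must be carefully tracked and shown to remain of lower order than $N$; the payoff is that the worst piece is proportional to the anticommutator $\{\langle D\rangle^{2\rho-2},|x|^2\}$, which (since $\langle D\rangle^{2\rho-2}\leqslant 1$) is comfortably controlled in form by $N$. Once both Nelson hypotheses are secured, essential self-adjointness of $A_\rho$ on any core of $N$, and in particular on $C_0^\infty(\mathbb{R}^n)$, follows immediately.
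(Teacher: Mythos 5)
Your proposal is correct and rests on the same pillar as the paper's proof, namely the Nelson commutator theorem, but with a different comparison operator: you use the plain harmonic oscillator $N=1+|D|^2+|x|^2$, whereas the paper uses the adapted operator $N_\rho=1+|x|^2+\rho^{-2}\Psi'_\rho\left(|D|^2\right)^2|D|^2$, whose momentum part is precisely the square of the momentum factor appearing in $A_\rho$. Since $\rho^{-2}\Psi'_\rho\left(|D|^2\right)^2|D|^2=\langle D\rangle^{4\rho-4}|D|^2\leqslant|D|^2$, your $N$ dominates $N_\rho$; this makes the first Nelson hypothesis $\|A_\rho\phi\|\lesssim\|N\phi\|$ essentially immediate (the symbol $\xi_j\langle\xi\rangle^{2\rho-2}$ is at most linear in $\xi$ precisely because $\rho\leqslant1$), at the price of having to control the extra commutator ${\rm i}\left[A_\rho,|D|^2\right]=-\tfrac{2}{\rho}\langle D\rangle^{2\rho-2}|D|^2$, which is again harmless, being a multiplier of order $2\rho\leqslant2$ and hence form-bounded by $\tfrac{2}{\rho}N$. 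The paper, conversely, gets ${\rm i}[A_\rho,N_\rho]\lesssim N_\rho$ almost for free but must work for $A_\rho^2\lesssim N_\rho^2$ (the positivity argument \eqref{prop2_5}--\eqref{prop2_8}). Both routes are sound; yours is arguably the more elementary for $0<\rho\leqslant1$, while the paper's choice would survive a momentum symbol growing faster than $\langle\xi\rangle$. One bookkeeping caveat in your treatment of ${\rm i}\left[A_\rho,|x|^2\right]$: the resulting terms are not all ``a bounded multiplier times at most one factor of $x$ or $D$'' --- expanding $\left[\langle D\rangle^{2\rho-2},|x|^2\right]D_jx_j$ produces pieces carrying two factors of $x$ together with a stray $D_j$. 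They are still form-bounded by $N$, but to see this you must (i) absorb the stray $D_j$ into the decaying multiplier, i.e.\ use that $\left(\partial_{\xi_k}\langle\xi\rangle^{2\rho-2}\right)(D)D_j$ has bounded symbol because $2\rho-4\leqslant-2$, and (ii) symmetrize so that one factor of $x$ lands on each side of the quadratic form, exactly as the paper does in \eqref{prop2_2}. With those two routine adjustments your argument closes completely.
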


\begin{proof}
We define the operator $N_\rho$ by
\begin{equation}
N_\rho=\Psi'_\rho\left(|D|^2\right)^2|D|^2+|x|^2+1.\label{nelson}
\end{equation}
If $\rho>1/2$, $N_{\rho}$ is self-adjoint on $H^{4\rho-2}(\mathbb{R}^n)\cap\dom|x|^2$. Whereas if $\rho\leqslant1/2$, $\Psi'_\rho(|D|^2)^2|D|^2$ is bounded and $N_{\rho}$ is self-adjoint on $\dom|x|^2$. We compute on $C_0^\infty(\mathbb{R}^n)$,
\begin{gather}
{\rm i}\left[\hat{A}_\rho,\Psi'_\rho\left(|D|^2\right)^2|D|^2\right]=-2\left\{2\Psi''_\rho\left(|D|^2\right)|D|^2+\Psi'_\rho\left(|D|^2\right)\right\}\Psi'_\rho\left(|D|^2\right)^2|D|^2\nonumber\\
\lesssim\Psi'_\rho\left(|D|^2\right)^2|D|^2\label{prop2_1}
\end{gather}
because $\Psi''_\rho(|D|^2)|D|^2$ and $\Psi'_\rho(|D|^2)$ are bounded. In the rest of this proof, we put $D_{\rho j}=\Psi'_\rho(|D|^2)D_j$ for simplicity. We thus compute, for $1\leqslant j,k\leqslant n$, 
\begin{equation}
{\rm i}\left[D_{\rho j}x_j+x_jD_{\rho j},x_k^2\right]=2x_j{\rm i}\left[D_{\rho j},x_k\right]x_k+{\rm hc}+{\rm i}\left[\left[\left[D_{\rho j},x_k\right],x_k\right],x_j\right].\label{prop2_2}
\end{equation}
Because $[D_{\rho j},x_k]$ and $[[[D_{\rho j},x_k],x_k],x_j]$ are bounded, \eqref{prop2_2} implies
\begin{equation}
{\rm i}\bigl[\hat{A}_\rho,|x|^2\bigr]\lesssim|x|^2+1,\label{prop2_3}
\end{equation}
where we used the estimate
\begin{gather}
\left|\left(x_j{\rm i}\left[D_{\rho j},x_k\right]x_k\phi,\phi\right)\right|\leqslant\left\|{\rm i}\left[D_{\rho j},x_k\right]\right\|\|x_j\phi\|\|x_k\phi|\|
\lesssim|\|x_j\phi\|^2+|\|x_k\phi\|^2
\end{gather}
for $\phi\in C_0^\infty(\mathbb{R}^n)$. It follows from \eqref{prop2_1} and \eqref{prop2_3} that
\begin{equation}
{\rm i}\bigl[\hat{A}_\rho,N_\rho\bigr]\lesssim N_\rho.\label{prop2_4}
\end{equation}
We next compute, noting that $[D_{\rho j},x_j]$ dose not depend on $x_j$,
\begin{gather}
\left(D_{\rho j}x_j+x_jD_{\rho j}\right)^2=2D_{\rho j}^2x_j^2+2x_j^2D_{\rho j}^2-2D_{\rho j}\left[\left[D_{\rho j},x_j\right],x_j\right]-3\left[D_{\rho j},x_j\right]^2
\end{gather}
and we have
\begin{gather}
2\hat{A}_\rho^2=\sum_{j=1}^n\left(D_{\rho j}^2x_j^2+x_j^2D_{\rho j}^2\right)+\frac{1}{2}\sum_{j=1, k\not=j}^n\left(D_{\rho j}x_j+x_jD_{\rho j}\right)\left(D_{\rho k}x_k+x_kD_{\rho k}\right)\nonumber\\
-\ \frac{1}{2}\sum_{j=1}^n\left\{2D_{\rho j}\left[\left[D_{\rho j},x_j\right],x_j\right]+3\left[D_{\rho j},x_k\right]^2\right\}.\label{prop2_5}
\end{gather}
We here note that $D_{\rho j}[[D_{\rho j},x_j],x_j]$ is bounded. We also compute
\begin{gather}
N_\rho^2\geqslant\sum_{j=1}^n\left(D_{\rho j}^2x_j^2+x_j^2D_{\rho j}^2\right)+\sum_{j=1, k\not=j}^n\left(D_{\rho j}^2+x_j^2\right)\left(D_{\rho k}^2+x_k^2\right)+1.\label{prop2_6}
\end{gather}
We have
\begin{gather}
2\left(D_{\rho j}^2+x_j^2\right)\left(D_{\rho k}^2+x_k^2\right)-\left(D_{\rho j}x_j+x_jD_{\rho j}\right)\left(D_{\rho k}x_k+x_kD_{\rho k}\right)\nonumber\\
=\left(D_{\rho j}D_{\rho k}-x_jx_k\right)^2+\left(D_{\rho j}x_k-x_jD_{\rho k}\right)^2\nonumber\\
+D_{\rho j}^2D_{\rho k}^2+x_j^2x_k^2+D_{\rho j}^2x_k^2+x_j^2D_{\rho k}^2+R_{jk}\geqslant R_{jk},\label{prop2_7}
\end{gather}
where
\begin{gather}
\sum_{j=1, k\not=j}^nR_{jk}=4{\rm i}\sum_{j=1, k\not=j}^n\left\{x_j\Psi''_\rho\left(|D|^2\right)D_jD_kD_{\rho k}-D_{\rho j}\Psi''_\rho\left(|D|^2\right)D_jD_kx_k\right\}\nonumber\\
=4{\rm i}\sum_{j=1, k\not=j}^n\left[x_j, \Psi'_\rho\left(|D|^2\right)\Psi''_\rho\left(|D|^2\right)D_jD_k^2\right]\label{prop2_8}
\end{gather}
is bounded. From \eqref{prop2_5}, \eqref{prop2_6}, \eqref{prop2_7}, and \eqref{prop2_8}, it follows that
\begin{equation}
\hat{A}_\rho^2\lesssim N_\rho^2.\label{prop2_9}
\end{equation}
By \eqref{prop2_4} and \eqref{prop2_9}, the Nelson commutator theorem \cite[Theorem X.37]{ReSi} completes our proof.
\end{proof}

\begin{Lem}\label{lem3}
For $z\in\mathbb{C}\setminus\mathbb{R}$, the relation
\begin{equation}
\dom\langle x\rangle\subset\left\{\phi\in\dom\hat{A}_\rho\bigm|\left(z-H_\rho\right)^{-1}\phi\in\dom\hat{A}_\rho \right\}\label{lem3_1}
\end{equation}
holds.
\end{Lem}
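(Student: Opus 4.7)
The inclusion $C_0^\infty(\mathbb{R}^n)\subset\dom A_\rho$ is immediate from Proposition \ref{prop2}: since $C_0^\infty(\mathbb{R}^n)$ is a core for $A_\rho$, it lies in the domain of the unique self-adjoint closure, which we continue to denote $A_\rho$. The substance of the lemma is the second containment, namely that $(z-H_\rho)^{-1}\phi\in\dom A_\rho$ for every $\phi\in C_0^\infty(\mathbb{R}^n)$.

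My plan is to establish that $H_\rho$ is of class $C^1(A_\rho)$ in the Mourre--Amrein--Boutet de Monvel--Georgescu sense, namely that the quadratic form $[H_\rho,iA_\rho]$ defined on $C_0^\infty(\mathbb{R}^n)\subset\dom H_\rho\cap\dom A_\rho$ extends to a bounded operator from $\dom H_\rho=H^{2\rho}(\mathbb{R}^n)$ (with graph norm) into $L^2(\mathbb{R}^n)$. Once this has been verified, a standard invariance result (\cite{DeGe}, Theorem 6.2.10) yields that $(z-H_\rho)^{-1}$ preserves $\dom A_\rho$, which is exactly the claim.

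To verify the $C^1$ condition I would split $[H_\rho,iA_\rho]=[\Psi_\rho(|D|^2),iA_\rho]+[V,iA_\rho]$. The free part reduces, via Lemma \ref{lem1} applied in the $x$-variable together with $[x_j,\Psi_\rho(|D|^2)]=2i\Psi'_\rho(|D|^2)D_j$ modulo Helffer--Sj\"ostrand remainders, to $\tfrac{2}{\rho^2}\Psi'_\rho(|D|^2)^2|D|^2$ plus bounded terms; this leading piece is the momentum part of $N_\rho$ in \eqref{nelson1} and is bounded from $H^{2\rho}(\mathbb{R}^n)$ into $L^2(\mathbb{R}^n)$, as already exploited around \eqref{the2_5} in the proof of Theorem \ref{the2}. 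For the potential commutator, $[x,V]=0$ eliminates one summand, leaving $[\Psi'_\rho(|D|^2)D,V]\cdot x+{\rm hc}$; the long-range part is controlled directly by the gradient decay \eqref{long_range_decay}, while the short-range and singular parts are treated by applying the Helffer--Sj\"ostrand expansion to $\Psi'_\rho(|D|^2)$ and absorbing the multiplication operator via the relative bound \eqref{relatively_bound1}. The main technical obstacle is the singular term $[\Psi'_\rho(|D|^2)D,V_{\rm sing}]$ since $V_{\rm sing}$ is only $L^p$ and not assumed differentiable; the resolution follows the pattern leading to \eqref{the3_17}, namely rewriting the commutator so that $V_{\rm sing}$ always appears flanked by factors of $(z-|D|^2)^{-1}$ before invoking \eqref{relatively_bound1}, after which the remaining estimates are routine adaptations of the Helffer--Sj\"ostrand machinery already developed in Sections \ref{section_maximal_velocity_bound} and \ref{section_middle_velocity_bound}.
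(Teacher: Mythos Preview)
Your plan has a genuine circularity. The criterion you invoke---boundedness of the commutator form $[H_\rho,iA_\rho]$ on $\dom H_\rho$---is \emph{not} by itself sufficient to conclude that $(z-H_\rho)^{-1}$ preserves $\dom A_\rho$. The result you cite (which is Theorem~6.2.10 of \cite{AmBoGe}, not \cite{DeGe}) characterises $H_\rho\in C^1(A_\rho)$ by \emph{two} conditions: the form bound (i) together with the domain condition (ii) that $\{\phi\in\dom A_\rho:(z-H_\rho)^{-1}\phi,\ (\bar z-H_\rho)^{-1}\phi\in\dom A_\rho\}$ be a core for $A_\rho$. Condition (ii) is precisely the content of Lemma~\ref{lem3}; the paper uses Lemma~\ref{lem3} and the form bound \emph{together} to deduce $H_\rho\in C^1(A_\rho)$ in Corollary~\ref{cor1}, not the other way round. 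An alternative route to $C^1(A_\rho)$ that bypasses (ii) would be to show that $e^{i\theta A_\rho}$ leaves $H^{2\rho}(\mathbb{R}^n)$ invariant with locally uniform bounds (as in Mourre's Proposition~II.2), but you do not mention this, and for $A_\rho$---which is not the dilation generator---this invariance is far from obvious; the paper only establishes it for the dilation $A$ in Proposition~\ref{prop3}, and only under $V=V_{\rm long}$. A secondary issue: your claim that $[H_\rho,iA_\rho]$ extends boundedly from $H^{2\rho}$ into $L^2$ is stronger than what the paper obtains (only into $\mathscr{H}_{-2\rho}$, because of $V_{\rm sing}$), though this is not the fatal point.

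The paper's own proof is direct and avoids the $C^1$ machinery entirely: it shows that $\langle x\rangle(z-H_\rho)^{-1}\langle x\rangle^{-1}$ is bounded by iterating the second resolvent identity and using the free weighted estimate $\langle x\rangle^\nu\{z-\Psi_\rho(|D|^2)\}^{-1}\langle x\rangle^{-\nu}\in\mathscr{B}(L^2)$ (obtained from the explicit commutator \eqref{lem3_4} and complex interpolation). Once this weighted boundedness is in hand, for $\rho\leqslant1/2$ one concludes immediately since $\Psi'_\rho(|D|^2)\langle D\rangle$ is bounded, while for $\rho>1/2$ one checks in addition that $x_j(z-H_\rho)^{-1}C_0^\infty\subset H^{2\rho-1}$.
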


\begin{proof}
We prove the domain property
\begin{equation}
\left(z-H_{\rho}\right)^{-1}\dom\langle x\rangle\subset\dom\hat{A}_\rho\label{lem3_2}
\end{equation}
that is equivalent to \eqref{lem3_1}. We first prove that
\begin{equation}
\langle x\rangle\left(z-H_\rho\right)^{-1}\langle x\rangle^{-1}\label{lem3_3}
\end{equation}
is bounded. By the resolvent formula, we write
\begin{equation}
\left(z-H_\rho\right)^{-1}=\left\{z-\Psi_\rho\left(|D|^2\right)\right\}^{-1}V\left(z-H_\rho\right)^{-1}+\left\{z-\Psi_\rho\left(|D|^2\right)\right\}^{-1}
\end{equation}
It follows from
\begin{equation}
\left[x_j,\left\{z-\Psi_\rho\left(|D|^2\right)\right\}^{-1}\right]=2{\rm i}\Psi'_\rho\left(|D|^2\right)D_j\left\{z-\Psi_\rho\left(|D|^2\right)\right\}^{-2}\label{lem3_4}
\end{equation}
on $\dom\langle x\rangle$ for $1\leqslant j\leqslant n$ that
\begin{equation}
\langle x\rangle^{\nu}\left\{z-\Psi_\rho\left(|D|^2\right)\right\}^{-1}\langle x\rangle^{-\nu}\label{lem3_5}
\end{equation}
is bounded for $\nu\in\mathbb{R}$ by \eqref{lem3_4} and the complex interpolation derived from the Hadamard three-line theorem (\cite{ReSi}, Appendix to IX.4). Because $\langle x\rangle(V_{\rm sing}+V_{\rm short})(z-H_\rho)^{-1}$ and \eqref{lem3_5} of $\nu=1$ are bounded, to prove the boundedness of \eqref{lem3_3}, it suffices to prove that $\langle x\rangle V_{\rm long}(z-H_\rho)^{-1}\langle x\rangle^{-1}$ is bounded. Using the resolvent formula, we have
\begin{gather}
V_{\rm long}\left(z-H_\rho\right)^{-1}\nonumber\\
=V_{\rm long}\left\{z-\Psi_\rho\left(|D|^2\right)\right\}^{-1}V_{\rm long}\left(z-H_\rho\right)^{-1}+V_{\rm long}\left\{z-\Psi_\rho\left(|D|^2\right)\right\}^{-1}.\label{lem3_6}
\end{gather}
If $\gamma_{\rm long}\geqslant1/2$, writing $\langle x\rangle^{2\gamma_{\rm long}}V_{\rm long}\{z-\Psi_\rho(|D|^2)\}^{-1}V_{\rm long}$ such that
\begin{gather}
\langle x\rangle^{2\gamma_{\rm long}}V_{\rm long}\left\{z-\Psi_\rho(|D|^2)\right\}^{-1}V_{\rm long}\nonumber\\
=\langle x\rangle^{\gamma_{\rm long}}V_{\rm long}\langle x\rangle^{\gamma_{\rm long}}\left\{z-\Psi_\rho\left(|D|^2\right)\right\}^{-1}\langle x\rangle^{-\gamma_{\rm long}}\langle x\rangle^{\gamma_{\rm long}}V_{\rm long},\label{lem3_7}
\end{gather}
we find that \eqref{lem3_7} is bounded by \eqref{lem3_5} of $\nu=\gamma_{\rm long}$ and that $\langle x\rangle V_{\rm long}(z-H_\rho)^{-1}\langle x\rangle^{-1}$ is bounded by \eqref{lem3_6}. For the general $\gamma_{\rm long}>0$, we can take $N\in\mathbb{N}$ that satisfies $\gamma_{\rm long}\geqslant 1/N>0$ and iterate the above procedure $N-1$ times. If $\rho\leqslant1/2$, the boundedness of \eqref{lem3_3} implies \eqref{lem3_2} immediately because $\Psi'_\rho(|D|^2)\langle D\rangle$ is bounded and $\hat{A}_\rho$ is closed. If $\rho>1/2$, we can also prove that $\langle x\rangle\langle D\rangle^{2\rho-1}(z-H_\rho)^{-1}\langle x\rangle^{-1}$ is bounded in the same way, noting that
\begin{equation}
\langle x\rangle\langle D\rangle^{2\rho-1}\left\{z-\Psi_\rho\left(|D|^2\right)\right\}^{-1}\langle x\rangle^{-1}
\end{equation}
is bounded. We thus have \eqref{lem3_2} even for $1/2\leqslant\rho\leqslant1$. In more detail, because $\langle x\rangle(z-H_\rho)^{-1}\phi\in H^{2\rho-1}(\mathbb{R}^n)$ for $\phi\in\dom\langle x\rangle$, there exists a sequence $\psi_k\in C_0^\infty(\mathbb{R}^n)$ such that $\langle x\rangle\psi_k\rightarrow\langle x\rangle(z-H_\rho)^{-1}\phi$ as $k\rightarrow\infty$ in $H^{2\rho-1}(\mathbb{R}^n)$. We have $\hat{A}_\rho\psi_k\rightarrow\hat{A}_\rho(z-H_\rho)^{-1}\phi$ as $k\rightarrow\infty$ and $(z-H_\rho)^{-1}\phi\in\dom\hat{A_\rho}$ noting that $\hat{A}_\rho$ is closed.
\end{proof}

By Proposition \ref{prop1}, $V$ is relatively compact associated with $\Psi_\rho(|D|^2)$. This can be proved in the same way as in the standard Schr\"odinger case. Because the essential spectrum of $\Psi_\rho(|D|^2)$ is $[0,\infty)$, the essential spectrum of $H_\rho$ is also coincident with $[0,\infty)$ by virtue of the relative compactness of $V$ and the Weyl theorem (\cite[Theorem XIII.14]{ReSi}). 

We now prove the Mourre estimate. However, it seems difficult that the commutator \eqref{positive_commutator} extends on $H^{2\rho}(\mathbb{R}^n)\cap\dom\hat{A}_\rho$ in the form sense. To overcome this difficulty, we give a modification in $\hat{A}_\rho$ according to the original idea \cite{Mo}. Let $G_\rho\in C_0^\infty(\mathbb{R})$ such that $G_\rho(s)=\Psi'_\rho(s)$ on a some compact set of $\mathbb{R}$. We define
\begin{equation}
A_\rho=\frac{1}{2}\left\{G_\rho\left(|D|^2\right)D\cdot x+x\cdot G_\rho\left(|D|^2\right)D\right\}.\label{conjugate2}
\end{equation}
By the same way with Proposition \ref{prop2}, $A_\rho$ is essentially self-adjoint with the core $C_0^\infty(\mathbb{R}^n)$ and Lemma \ref{lem3} also holds even for $A_\rho$. In particular, it follows from the proof of Lemma \ref{lem3} that
\begin{equation}
2A_\rho\phi=G_\rho\left(|D|^2\right)D\cdot(x\phi)+x\cdot\left\{G_\rho\left(|D|^2\right)D\phi\right\}
\end{equation}
for $\phi\in\dom\langle x\rangle$. This will be often used in the rest of our discussion.

\begin{The}\label{the4}
\textbf{Mourre estimate.}
Let $0<\lambda_1<\lambda_2$ and $g\in C_0^\infty((\lambda_1,\lambda_2))$. Assume that $G_\rho(s)=\Psi'_\rho(s)$ if $\Psi_\rho(s)\in\supp g$. There exists a compact operator $K$ such that
\begin{equation}
g(H_\rho){\rm i}\left[H_\rho,A_\rho\right]_{-2\rho}g(H_\rho)\geqslant\frac{2\rho^2\lambda_1}{(1+\lambda_2)^{(1-\rho)/\rho}}g(H_\rho)^2+K\label{mourre_inequality1}
\end{equation}
holds, where the sense of the extended commutator $[H_\rho,A_\rho]_{-2\rho}$ is explained in the proof.
\end{The}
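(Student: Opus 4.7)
The plan is to read the inequality as a compact perturbation of a Fourier multiplier estimate for the free dynamics, so the strategy is to split $[H_\rho,A_\rho]$ into a free part and a potential part, compute the free part explicitly, compare it to $2\lambda_1/(1+\lambda_2)^{(1-\rho)/\rho}$ via a one-variable calculus lemma, and show that the potential part only contributes compact operators after sandwiching with $g(H_\rho)$. First I would make sense of the extended commutator $[H_\rho,A_\rho]_{-2\rho}$ as a sesquilinear form on $H^{2\rho}(\mathbb{R}^n)$: by Proposition \ref{prop2}, $C_0^\infty(\mathbb{R}^n)$ is a common core for $H_\rho$ and $A_\rho$, and formally computing $\langle A_\rho\phi,H_\rho\psi\rangle-\langle H_\rho\phi,A_\rho\psi\rangle$ there will yield a form that extends continuously to $H^{2\rho}\times H^{2\rho}$, i.e.\ an operator $H^{2\rho}(\mathbb{R}^n)\to H^{-2\rho}(\mathbb{R}^n)$ (hence the subscript).

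Next, on $C_0^\infty(\mathbb{R}^n)$, the identity $[\Psi_\rho(|D|^2),x_j]=-2{\rm i}\Psi'_\rho(|D|^2)D_j$ together with the fact that $\Psi_\rho(|D|^2)$ commutes with $\Psi'_\rho(|D|^2)$ and with $D$ gives
\begin{equation}
{\rm i}\bigl[\Psi_\rho(|D|^2),A_\rho\bigr]=\frac{2}{\rho^2}\Psi'_\rho(|D|^2)^2|D|^2=2(1+|D|^2)^{2\rho-2}|D|^2=:m(\Psi_\rho(|D|^2)),\nonumber
\end{equation}
where $m(\lambda)=2(1+\lambda)^{(2\rho-2)/\rho}\{(1+\lambda)^{1/\rho}-1\}$ under the substitution $1+|\xi|^2=(1+\lambda)^{1/\rho}$. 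Using $(1+\lambda)^{1/\rho}=(1+\lambda)(1+\lambda)^{(1-\rho)/\rho}$, an elementary inequality shows $m(\lambda)\geqslant 2\lambda/(1+\lambda)^{(1-\rho)/\rho}$, and since $\lambda\mapsto \lambda/(1+\lambda)^{(1-\rho)/\rho}$ is monotone non-decreasing on $[0,\infty)$, one gets $m(\lambda)\geqslant 2\lambda_1/(1+\lambda_2)^{(1-\rho)/\rho}$ on $\supp g\subset(\lambda_1,\lambda_2)$. By the spectral theorem this yields
\begin{equation}
g(\Psi_\rho(|D|^2))\,m(\Psi_\rho(|D|^2))\,g(\Psi_\rho(|D|^2))\geqslant\frac{2\lambda_1}{(1+\lambda_2)^{(1-\rho)/\rho}}g(\Psi_\rho(|D|^2))^2.\nonumber
\end{equation}

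To complete the proof I would establish two compactness statements. (i) $V$ is $\Psi_\rho(|D|^2)$-compact by Proposition \ref{prop1} combined with the Weyl theorem and the vanishing of $V$ at infinity, so via the Helffer--Sj\"ostrand formula applied to the resolvent identity $(z-H_\rho)^{-1}-(z-\Psi_\rho(|D|^2))^{-1}=(z-H_\rho)^{-1}V(z-\Psi_\rho(|D|^2))^{-1}$, the difference $g(H_\rho)-g(\Psi_\rho(|D|^2))$ is compact, which lets me replace $g(\Psi_\rho(|D|^2))$ by $g(H_\rho)$ on both sides modulo a compact error. (ii) The potential commutator $g(H_\rho)\,{\rm i}[V,A_\rho]\,g(H_\rho)$ is compact. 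This is the main obstacle. I would expand $[V,\Psi'_\rho(|D|^2)D\cdot x+{\rm hc}]$ using Lemma \ref{lem1} to isolate the leading piece $\Psi'_\rho(|D|^2)\nabla V_{\rm long}\cdot x$ (controlled by $\langle x\rangle^{-\gamma_{\rm long}}$ thanks to \eqref{long_range_decay}) and analogous pieces for $V_{\rm short}$ and $V_{\rm sing}$; each such factor $\langle x\rangle^{-\epsilon}$ multiplied by $g(H_\rho)$ is compact since $\langle x\rangle^{-\epsilon}(z-H_\rho)^{-1}$ is compact on $L^2(\mathbb{R}^n)$ by the Rellich criterion, using Proposition \ref{prop1} for the singular part. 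The main technical burden is that $\Psi'_\rho(|D|^2)$ does not commute with multiplication operators, so Lemma \ref{lem1} has to be applied repeatedly to keep the remainders of sufficiently low operator order, exactly as in the $I_3$-estimate of Theorem \ref{the3}. Combining (i), (ii), the free lower bound above, and $V$-boundedness of $m(\Psi_\rho(|D|^2))$ relative to $H_\rho$ (to transport the spectral bound through the replacement step) produces \eqref{mourre_inequality1} with a compact $K$.
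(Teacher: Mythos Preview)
Your strategy matches the paper's: compute the free commutator explicitly as a Fourier multiplier, get the spectral lower bound, replace $g(\Psi_\rho(|D|^2))$ by $g(H_\rho)$ via Helffer--Sj\"ostrand and the resolvent identity, and show the potential commutator is compact after sandwiching. Two points need repair, however.

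First, the monotonicity claim is false for $\rho<1/2$: with $\alpha=(1-\rho)/\rho>1$ the function $\lambda\mapsto\lambda/(1+\lambda)^\alpha$ eventually decreases. The bound you want is simpler than monotonicity anyway: on $(\lambda_1,\lambda_2)$ one has $\lambda\geqslant\lambda_1$ and $(1+\lambda)^{(1-\rho)/\rho}\leqslant(1+\lambda_2)^{(1-\rho)/\rho}$ separately, which is exactly how the paper argues in \eqref{the4_6}.

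Second, and more substantive, your plan for the potential commutator would fail for $V_{\rm sing}$ as written. You propose to ``isolate the leading piece $\Psi'_\rho(|D|^2)\nabla V_{\rm long}\cdot x$ \ldots\ and analogous pieces for $V_{\rm short}$ and $V_{\rm sing}$,'' but $V_{\rm sing}$ is only in $L^p$ and has no gradient. The paper does not differentiate $V_{\rm sing}$ (or $V_{\rm short}$); it writes $[V_{\rm sing}+V_{\rm short},\Psi'_\rho(|D|^2)D\cdot x]$ in undifferentiated form and uses that $\langle x\rangle(V_{\rm sing}+V_{\rm short})\langle\Psi_\rho(|D|^2)\rangle^{-1}$ is compact (from Proposition~\ref{prop1} and the $\gamma_{\rm sing},\gamma_{\rm short}>1$ decay). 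Only $V_{\rm long}$ is differentiated, and there the commutator $[V_{\rm long},\Psi'_\rho(|D|^2)]D\cdot x$ is handled by Helffer--Sj\"ostrand for $\Psi'_\rho$ directly (not Lemma~\ref{lem1}, which is specific to $\Psi_\rho$). With these fixes your outline is the paper's proof.
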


\begin{proof}
We first suppose that $\rho<1$. By \cite[Proposition II.1]{Mo}, the form commutator ${\rm i}[\Psi_\rho(|D|^2),A_\rho]$ on $H^{2\rho}(\mathbb{R}^n)\cap\dom|x| ^2$ is extended on $H^{2\rho}(\mathbb{R}^n)\cap\dom A_\rho$ and there exists the self-adjoint operator ${\rm i}[\Psi_\rho(|D|^2),A_\rho]^0$ associated with the closed extension of ${\rm i}[\Psi_\rho(|D|^2),A_\rho]$ (see also the proof of \cite[Corollary I.3]{Mo}). We therefore have
\begin{equation}
{\rm i}\left[\Psi_\rho\left(|D|^2\right),A_\rho\right]^0=2G_\rho(|D|^2)^2|D|^2
\end{equation}
that is a bounded operator. Using the fact that $\langle x\rangle^{-1}A_\rho$ is bounded and \eqref{relatively_bound1}, we estimate
\begin{gather}
\left|\left(A_\rho\phi,V_{\rm sing}\psi\right)_{L^2}\right|=\left|\left(\langle x\rangle^{-1}A_\rho\phi,\langle x\rangle V_{\rm sing}\psi\right)_{L^2}\right|\nonumber\\
\lesssim\left\|\phi\right\|_{L^2}\left\{\epsilon\left\||\langle D\rangle^{2\rho}\psi\right\|_{L^2}+C_\epsilon\|\psi\|_{L^2}\right\}\lesssim\left\|\phi\right\|_{L^2}\left\|\langle D\rangle^{2\rho}\psi\right\|_{L^2}
\end{gather}
and
\begin{equation}
\left|\left(A_\rho\phi,V_{\rm sing}\psi\right)_{L^2}-\left(V_{\rm sing}\phi,A_\rho\psi\right)_{L^2}\right|\lesssim\left\|\langle D\rangle^{2\rho}\phi\right\|_{L^2}\left\|\langle D\rangle^{2\rho}\psi\right\|_{L^2}
\end{equation}
for $\phi,\psi\in H^{2\rho}(\mathbb{R}^n)\cap\dom A_\rho$. By the Riesz representation theorem (\cite[Theorem II.4]{ReSi}) and Lemma \ref{lem3}, there exists a bounded operator $L_{V_{\rm sing}A_\rho}: H^{2\rho}(\mathbb{R}^n)\to\mathscr{H}_{-2\rho}\simeq H^{2\rho}(\mathbb{R}^n)^*$ such that
\begin{equation}
\left(A_\rho\phi,V_{\rm sing}\psi\right)_{L^2}-\left(V_{\rm sing}\phi,A_\rho\psi\right)_{L^2}=\left(\langle D\rangle^{-2\rho}L_{V_{\rm sing}A_\rho}\phi,\langle D\rangle^{2\rho}\psi\right)_{L^2}.
\end{equation}
We note that $\mathscr{H}_{-2\rho}$ is the completion of
\begin{equation}
\left\{\phi\in L^2(\mathbb{R}^n)\biggm| \int_{\mathbb{R}^n}\langle \xi\rangle^{-4\rho}\left|\mathscr{F}\phi(\xi)\right|^2d\xi<\infty\right\}
\end{equation}
that is regarded as the dual space of $H^{2\rho}(\mathbb{R}^n)$, and that the relation $H^{2\rho}(\mathbb{R}^n)\subset L^2(\mathbb{R}^n)\subset\mathscr{H}_{-2\rho}$ holds. We denote $L_{V_{\rm sing}A_\rho}=[V_{\rm sing},A_\rho]_{-2\rho}$ and (see also \cite[Lemma 6.2]{Iso} or the paragraphs below of \cite[Theorem 6.2.10]{AmBoGe}). Similarly, we define $[V_{\rm short},A_\rho]_{-2\rho}$ by the estimate
\begin{equation}
\left|\left(A_\rho\phi,V_{\rm short}\psi\right)_{L^2}\right|=\left|\left(\langle x\rangle^{-1}A_\rho\phi,\langle x\rangle V_{\rm short}\psi\right)_{L^2}\right|\lesssim\left\|\phi\right\|_{L^2}\left\|\psi\right\|_{L^2}.
\end{equation}
In contrast with $V_{\rm sing}$ and $V_{\rm short}$, $V_{\rm long}$ is differentiable. When $\rho<1$, the commutator $[V_{\rm long},A_\rho]$ on $\dom\langle x\rangle$ is extended to a compact operator on $L^2(\mathbb{R}^n)$ by the computations below (see \eqref{the4_7}, \eqref{the4_8}, and \eqref{the4_9}). Therefore, by the extensions of the commutators,
\begin{gather}
\left[H_\rho,A_\rho\right]_{-2\rho}=\left[\Psi_\rho\left(|D|^2\right),A_\rho\right]_{-2\rho}+\left[V_{\rm sing}+V_{\rm short},A_\rho\right]_{-2\rho}+\left[V_{\rm long},A_\rho\right]_{-2\rho}\nonumber\\
=\left[\Psi_\rho\left(|D|^2\right),A_\rho\right]^0+\left[V_{\rm sing}+V_{\rm short},A_\rho\right]_{-2\rho}+\left[V_{\rm long},A_\rho\right]\label{the4_1}
\end{gather}
holds on $H^{2\rho}(\mathbb{R}^n)$ because $H^{2\rho}(\mathbb{R}^n)\cap\dom A_\rho$ is core for $H_\rho$, and the left-hand side of \eqref{mourre_inequality1} is defined as the bounded operator on $L^2(\mathbb{R}^n)$. We note that
\begin{gather}
{\rm i}\left[\Psi_\rho\left(|D|^2\right),A_\rho\right]^0\nonumber\\
=2G_\rho\left(|D|^2\right)\left\{G_\rho\left(|D|^2\right)\langle D\rangle^2-\rho\right\}+2G_\rho\left(|D|^2\right)\left\{\rho-G_\rho\left(|D|^2\right)\right\}
\end{gather}
and that
\begin{gather}
g\left(\Psi_\rho\left(|D|^2\right)\right)G_\rho\left(|D|^2\right)\left\{\rho-G_\rho\left(|D|^2\right)\right\}g\left(\Psi_\rho\left(|D|^2\right)\right)\nonumber\\
=\rho g\left(\Psi_\rho\left(|D|^2\right)\right)\Psi'_\rho\left(|D|^2\right)\left\{1-\langle D\rangle^{2\rho-2}\right\}g\left(\Psi_\rho\left(|D|^2\right)\right)\geqslant0.
\end{gather}
We therefore have the inequality
\begin{gather}
g\left(\Psi_\rho\left(|D|^2\right)\right){\rm i}\left[\Psi_\rho\left(|D|^2\right),A_\rho\right]^0g\left(\Psi_\rho\left(|D|^2\right)\right)\nonumber\\
\geqslant2\rho g\left(\Psi_\rho\left(|D|^2\right)\right)\Psi'_\rho\left(|D|^2\right)\Psi_\rho\left(|D|^2\right)g\left(\Psi_\rho\left(|D|^2\right)\right)\label{the4_2}
\end{gather}
holds. Because
\begin{equation}
g(H_\rho)-g\left(\Psi_\rho\left(|D|^2\right)\right)=\frac{1}{2\pi{\rm i}}\int_\mathbb{C}\bar{\partial_z}\tilde{g}(z)\left(z-H_\rho\right)^{-1}V\left\{z-\Psi_\rho\left(|D|^2\right)\right\}^{-1}{\rm d}z\wedge{\rm d}\bar{z}\label{the4_3}
\end{equation}
is compact, there exists a compact operators $\hat{K}$ such that
\begin{gather}
g(H_\rho){\rm i}\left[\Psi_\rho\left(|D|^2\right),A_\rho\right]^0g(H_\rho)\nonumber\\
\geqslant2\rho g\left(\Psi_\rho\left(|D|^2\right)\right)\Psi'_\rho\left(|D|^2\right)\Psi_\rho\left(|D|^2\right)g\left(\Psi_\rho\left(|D|^2\right)\right)+\hat{K}\nonumber\\
\geqslant\frac{2\rho^2\lambda_1}{(1+\lambda_2)^{(1-\rho)/\rho}}g\left(\Psi_\rho\left(|D|^2\right)\right)^2+\hat{K}.\label{the4_4}
\end{gather}
On the right-hand side of \eqref{the4_4}, we used the relation $\Psi'_\rho=\rho/(1+\Psi_\rho)^{(1-\rho)/\rho}$ and the inequality
\begin{equation}
\int_{\lambda_1}^{\lambda_2}g(\lambda)\frac{\lambda}{(1+\lambda)^{(1-\rho)/\rho}}g(\lambda)E_{\Psi_\rho(|D|^2)}({\rm d}\lambda)\geqslant\frac{\lambda_1}{(1+\lambda_2)^{(1-\rho)/\rho}}g\left(\Psi_\rho\left(|D|^2\right)\right)^2,\label{the4_5}
\end{equation}
where $E_{\Psi_\rho(|D|^2)}$ is the spectral measure of $\Psi_\rho(|D|^2)$. Writing such that
\begin{equation}
g(H_\rho)\left[V_{\rm sing}+V_{\rm short},A_\rho\right]_{-2\rho}g(H_\rho)=g(H_\rho)\left(V_{\rm sing}+V_{\rm short}\right)\langle x\rangle\langle x\rangle^{-1}A_\rho g(H_\rho)-{\rm hc},\label{the4_6}
\end{equation}
we find that \eqref{the4_6} is compact because $\langle x\rangle(V_{\rm sing}+V_{\rm short})\langle\Psi_\rho(|D|^2)\rangle^{-1}$ is compact by Proposition \ref{prop1} and \eqref{short_range_decay}. We also write
\begin{equation}
\left[V_{\rm long}, G_\rho\left(|D|^2\right)D\cdot x\right]=\left[V_{\rm long},G_\rho\left(|D|^2\right)\right]D\cdot x+{\rm i}G_\rho\left(|D|^2\right)\nabla V_{\rm long}\cdot x\label{the4_7}
\end{equation}
on $C_0^\infty(\mathbb{R}^n)$ that is a core of $\dom\langle x\rangle$. We know that $G_\rho(|D|^2)\nabla V_{\rm long}\cdot x$ is compact by \eqref{long_range_decay}. The compactness of the commutator $\left[V_{\rm long}, A_\rho\right]$ is obtained as follows. Noting that $(z-|D|^2)^{-1}[|D|^2,V_{\rm long}](z-|D|^2)^{-1}$ is a compact operator, we compute
\begin{gather}
\left(z-|D|^2\right)^{-1}{\rm i}\left[|D|^2,V_{\rm long}\right]\left(z-|D|^2\right)^{-1}D\cdot x\nonumber\\
=\left(z-|D|^2\right)^{-1}\left(D\cdot\nabla V_{\rm long}+\nabla V_{\rm long}\cdot D\right)D\cdot x\left(z-|D|^2\right)^{-1}\nonumber\\
+\ 2\left(z-|D|^2\right)^{-1}\left[|D|^2,V_{\rm long}\right]\left(z-|D|^2\right)^{-2}|D|^2\label{the4_8}
\end{gather}
and estimate such that
\begin{equation}
\left\|\left(z-|D|^2\right)^{-1}\left[|D|^2,V_{\rm long}\right]\left(z-|D|^2\right)^{-1}D\cdot x\right\|\lesssim|{\rm Im}z|^{-2}\langle z\rangle+|{\rm Im}z|^{-3}\langle z\rangle^{3/2}.\label{the4_9}
\end{equation}
By the Helffer--Sj\"ostrand formula, we find that $\left[V_{\rm long}, A_\rho\right]$ is compact because $x\cdot\nabla V_{\rm long}\langle D\rangle^{-1}$ is also compact. From \eqref{the4_1} and \eqref{the4_4}, we have \eqref{mourre_inequality1} with a compact operator
\begin{equation}
K=\hat{K}+\frac{2\lambda_1}{(1+\lambda_2)^{(1-\rho)/\rho}}\left\{g\left(\Psi_\rho\left(|D|^2\right)\right)^2-g(H_\rho)^2\right\}+g(H_\rho){\rm i}\left[V,A_\rho\right]_{-2\rho}g(H_\rho).
\end{equation}
The case of $\rho=1$ is the traditional result given by \cite{Mo}. Because ${\rm i}[|D|^2,A]=2|D|^2=2H_1-2V$ is obtained directly, we do not have to compute \eqref{the4_3}, \eqref{the4_4}, and \eqref{the4_5}. We only note that, although $\left[V_{\rm long}, A\right]={\rm i}x\cdot\nabla V_{\rm long}$ is not compact but just bounded, $x\cdot\nabla V_{\rm long}g(H_1)$ is compact.
\end{proof}

The Mourre inequality \eqref{mourre_inequality1} provides us detailed information on the eigenvalues of $H_\rho$ as in Corollary \ref{cor1} below. To investigate the singular continuous spectrum of $H_\rho$, we have to prove the limiting absorption principle in Mourre theory. Many studies have investigated this topic, even for the $N$-body Schr\"odinger operator case (e.g., \cite{PeSiSi}, \cite{AmBoGe} and \cite{Ta}).
 
\begin{Cor}\label{cor1}
Any point in $\sigma_{\rm pp}(H_\rho)\setminus\{0\}$ is isolated and its multiplicity is at most finite, and the only accumulation point of $\sigma_{\rm pp}(H_\rho)$ can be at zero.
\end{Cor}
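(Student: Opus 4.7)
The plan is to combine the Mourre inequality of Theorem \ref{the4} with the Virial theorem in the standard fashion of Mourre theory. For positive eigenvalues, I would fix an arbitrary compact interval $I\subset(0,\infty)$, assume for contradiction that $\sigma_{\rm pp}(H_\rho)\cap I$ contains infinitely many eigenvalues counted with multiplicity, produce an orthonormal sequence of associated eigenvectors $\{\phi_n\}$, localize it by an energy cut--off $g(H_\rho)$ supported in a slightly larger interval $(\lambda_1,\lambda_2)\supset I$ with $\lambda_1>0$, and exploit the strict positivity on the right--hand side of \eqref{mourre_inequality1} against the compact remainder $K$ and the weak convergence $\phi_n\rightharpoonup 0$.

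The first step is to establish the Virial identity
\[
(\phi,i[H_\rho,A_\rho]_{-2\rho}\phi)_{L^2}=0
\]
for every eigenvector $\phi\in H^{2\rho}(\mathbb{R}^n)=\dom H_\rho$. Since $\phi$ need not belong to $\dom A_\rho$, the naive cancellation is not available; I would invoke the $C^1(A_\rho)$--type invariance of Lemma \ref{lem3}, together with the essential self--adjointness of $A_\rho$ on $C_0^\infty(\mathbb{R}^n)$ from Proposition \ref{prop2}, to approximate $\phi=(\lambda-z)(H_\rho-z)^{-1}\phi$ by elements in $\dom A_\rho$ and carry out the cancellation on the approximants before passing to the limit in the bounded form $[H_\rho,A_\rho]_{-2\rho}\colon H^{2\rho}(\mathbb{R}^n)\to\mathscr{H}_{-2\rho}$. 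With the identity in hand, choose $0<\lambda_1<\inf I$, $\lambda_2>\sup I$, and $g\in C_0^\infty((\lambda_1,\lambda_2))$ with $g\equiv 1$ on $I$; then $g(H_\rho)\phi_n=\phi_n$, and Theorem \ref{the4} will give
\[
0=(\phi_n,i[H_\rho,A_\rho]_{-2\rho}\phi_n)_{L^2}\geq\frac{2\lambda_1}{(1+\lambda_2)^{(1-\rho)/\rho}}+(\phi_n,K\phi_n)_{L^2}.
\]
Orthonormality forces $\phi_n\rightharpoonup 0$ weakly, compactness of $K$ then forces $(\phi_n,K\phi_n)_{L^2}\to 0$, and the limit yields the contradiction $0\geq 2\lambda_1/(1+\lambda_2)^{(1-\rho)/\rho}>0$. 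This proves $\sigma_{\rm pp}(H_\rho)\cap I$ is finite, so every positive eigenvalue is isolated with finite multiplicity and can only accumulate at $0$.

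Negative eigenvalues are dealt with by the remark immediately preceding Theorem \ref{the4} that $\sigma_{\rm ess}(H_\rho)=[0,\infty)$: any eigenvalue in $(-\infty,0)$ is automatically isolated with finite multiplicity, and accumulation can occur only at the edge $0$ of the essential spectrum. Combining with the positive--energy analysis yields the corollary. The main obstacle I expect is precisely the Virial identity in the extended form sense, since $[H_\rho,A_\rho]_{-2\rho}$ is only a bounded map on $H^{2\rho}(\mathbb{R}^n)$ rather than an operator commutator on $\dom A_\rho\cap\dom H_\rho$; Lemma \ref{lem3} and Proposition \ref{prop2} supply exactly the domain--theoretic input required to turn the formal cancellation into a rigorous limit, and once this step is in place the remainder is a routine deployment of the Mourre inequality.
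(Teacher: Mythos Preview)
Your proposal is correct and follows essentially the same route as the paper. The paper verifies that $H_\rho\in C^1(A_\rho)$ via Lemma~\ref{lem3} together with the form bound $|(A_\rho\phi,H_\rho\phi)-(H_\rho\phi,A_\rho\phi)|\lesssim\|\langle H_\rho\rangle\phi\|^2$, and then simply cites the abstract virial theorem and Mourre machinery of \cite{AmBoGe} (Theorem~6.2.10, Proposition~7.2.10, Corollary~7.2.11); your orthonormal--sequence contradiction argument and your separate treatment of negative eigenvalues via $\sigma_{\rm ess}(H_\rho)=[0,\infty)$ are exactly what those citations unpack.
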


\begin{proof}
We already know that
\begin{equation}
\left|\left(A_\rho\phi,H_\rho\phi\right)_{L^2}-\left(H_\rho\phi,A_\rho\phi\right)_{L^2}\right|\lesssim\left\|\langle D\rangle^{2\rho}\phi\right\|_{L^2}^2\lesssim\left\|\langle H_\rho\rangle\phi\right\|_{L^2}^2\label{cor1_1}
\end{equation}
holds for $\phi\in H^{2\rho}(\mathbb{R}^n)\cap\dom A_\rho$ by the proof of Theorem \ref{the4}. Lemma \ref{lem3} and \eqref{cor1_1} imply that $H_\rho$ belongs to the class $C^1(A_\rho)$. The Mourre inequality \eqref{mourre_inequality1} and virial theorem compete our proof (see \cite[Theorem 6.2.10, Proposition 7.2.10, and Corollary 7.2.11]{AmBoGe}).
\end{proof}

\begin{Rem}\label{rem5}
If $1/2\leqslant\rho\leqslant1$, we can choose \eqref{conjugate_dilation} as the conjugate operator by virtue of {\rm \cite[Theorems 3.4 and 3.10]{IsLoSa}}. In more details, the commutator $\Psi_\rho(|D|^2)$ and $A$ on $C_0^\infty(\mathbb{R}^n)$ can be extended to a self-adjoint operator
\begin{equation}
{\rm i}\left[\Psi_\rho\left(|D|^2\right),A\right]^0=2\Psi'_\rho\left(|D|^2\right)|D|^2.
\end{equation}
Noting that Lemma {\rm \ref{lem3}} holds for replacing $A_\rho$ with $A$ and
\begin{equation}
\left|\left(A\phi,V_{\rm sing}\psi\right)_{L^2}\right|\lesssim\left\|\langle D\rangle\phi\right\|_{L^2}\left\|\langle D\rangle^{2\rho}\psi\right\|_{L^2}\label{rem5_1}
\end{equation}
also holds for $\phi,\psi\in H^{2\rho}(\mathbb{R}^n)\cap\dom A$, the commutator $[H_\rho,A]_{-2\rho}$ is defined as a bounded operator from $H^{2\rho}(\mathbb{R}^n)$ to $\mathscr{H}_{-2\rho}$. The shape of the Mourre estimate in this case is
\begin{gather}
g(H_\rho){\rm i}\left[H_\rho,A\right]_{-2\rho}g(H_\rho)\nonumber\\
\geqslant2\rho\lambda_1g(H_\rho)^2+g(H_\rho)\left\{{\rm i}\left[V_{\rm sing}+V_{\rm short},A\right]_{-2\rho}+x\cdot\nabla V_{\rm long}-2\rho V\right\}g(H_\rho)
\end{gather}
and the second term of the right-hand side is compact. If $0<\rho<1/2$, the commutator $H_\rho$ and $A_\rho$ can be extended to the map $H^1(\mathbb{R}^n)$ to $\mathscr{H}_{-1}$ by \eqref{rem5_1}. However, unfortunately, the left-hand side of the Mourre inequality can not be defined because $H^1(\mathbb{R}^n)\subsetneq H^{2\rho}(\mathbb{R}^n)$. Meanwhile, if $V$ has the long-range part only {\rm (i.e., $V=V_{\rm long}$)}, we can employ $A$ for all $0<\rho\leqslant1$ with the Mourre estimate
\begin{equation}
g(H_\rho){\rm i}\left[H_\rho,A\right]^0g(H_\rho)\geqslant2\rho\lambda_1g(H_\rho)^2+g(H_\rho)\left({\rm i}x\cdot\nabla V_{\rm long}-2\rho V\right)g(H_\rho)
\end{equation}
by {\rm\cite[Theorem 3.10]{IsLoSa}}.
\end{Rem}

We have everything arranged to prove the minimal velocity bound.
\begin{proof}[Proof of Theorem \ref{the1}]
As in the proofs before, we first assume that $\rho<1$. Let $g\in C_0^\infty((0,\infty))$ satisfy $fg=f$. Let $\chi$ and $\chi_1$ that belong to $C_0^\infty(\mathbb{R})$ satisfy that $\chi(s)=1$ if $|s|<\theta_0$ and $\chi(s)=0$ if $|s|>2\theta_0$, and that $\chi_1(s)=1$ if $|s|<2\theta_0$ and $\chi_1(s)=0$ if $|s|>3\theta_0$. The size of $\theta_0$ is to be determined later. According to \cite[Proposition 4.4.7]{DeGe}, and \cite[Theorem 2.38]{Iso}, we define the observables $\mathscr{M}(t)$ and $\mathscr{L}(t)$ by
\begin{align}
\mathscr{M}(t)&=\frac{1}{2}\left\{\Psi'_\rho\left(|D|^2\right)D-\frac{x}{2t} \right\}\cdot\frac{x}{|x|}\chi'\left(\frac{|x|}{2t}\right)+{\rm hc}+\chi\left(\frac{|x|}{2t}\right),\\
\mathscr{L}(t)&=f(H_\rho)\mathscr{M}(t)g(H_\rho)\frac{A_\rho}{t}g(H_\rho)\mathscr{M}(t)f(H_\rho).
\end{align}
Because $g(H_\rho)\dom\langle x\rangle\subset\dom A_\rho$ holds and $A_\rho g(H_\rho)\langle x\rangle^{-1}$ is a bounded operator as we proved in Lemma \ref{lem3}, $\mathscr{L}(t)$ is well-defined. By the supporting properties $\chi=\chi_1\chi$ and $\chi'=\chi_1\chi'$, we compute
\begin{equation}
\mathscr{M}(t)=\chi_1\left(\frac{|x|}{2t}\right)\mathscr{M}(t)+B(t)\label{the1_1}
\end{equation}
with
\begin{equation}
B(t)=\frac{1}{2}\sum_{j=1}^n\left[\Psi'_\rho\left(|D|^2\right)D_j,\chi_1\left(\frac{|x|}{2t}\right)\right]\frac{x_j}{|x|}\chi'\left(\frac{|x|}{2t}\right).
\end{equation}
We already know $B(t)=\mathcal{O}(t^{-1})$ from the computation \eqref{the2_4}. Moreover, by
\begin{gather}
\left\|\frac{x_k}{t}\left(z-|D|^2\right)^{-1}D_j\left[|D|^2,\chi_1\left(\frac{|x|}{2t}\right)\right]\left(z-|D|^2\right)^{-1}\right\|\nonumber\\
\lesssim t^{-1}|{\rm Im}z|^{-2}\langle z\rangle+t^{-2}|{\rm Im}z|^{-3}\langle z\rangle^{3/2}
\end{gather}
and the Helffer--Sj\"ostrand formula, we find that
\begin{equation}
\frac{x_k}{t}D_j\left[\Psi'_\rho\left(|D|^2\right),\chi_1\left(\frac{|x|}{2t}\right)\right]=\mathcal{O}\left(t^{-1}\right)\label{the1_2}
\end{equation}
for $1\leqslant j,k\leqslant n$. \eqref{the1_2} and $(x_k/t)[D_j,\chi_1(|x|/(2t))]\Psi'_\rho(|D|^2)=\mathcal{O}(t^{-1})$ yield
\begin{equation}
\frac{x_k}{t}\left[\Psi'_\rho\left(|D|^2\right)D_j,\chi_1\left(\frac{|x|}{2t}\right)\right]=\mathcal{O}\left(t^{-1}\right)\label{the1_3}
\end{equation}
and
\begin{equation}
\frac{x_k}{t}B(t)=\mathcal{O}\left(t^{-1}\right).\label{the1_4}
\end{equation}
We therefore have
\begin{equation}
\frac{A_\rho}{t}g(H_\rho)\left\langle\frac{x}{2t}\right\rangle^{-1}=\mathcal{O}(1)\label{the1_5}
\end{equation}
and $(A_\rho/t)g(H_\rho)\mathscr{M}(t)=\mathcal{O}(1)$ from \eqref{the1_1} and \eqref{the1_4}. This implies that $\mathscr{L}(t)=\mathcal{O}(1)$. We write
\begin{gather}
\mathbb{D}_{H_\rho}\mathscr{L}(t)=I_1(t)+I_2(t)+I_3(t)+I_4(t),
\end{gather}
where
\begin{align}
I_1(t)&=f(H_\rho)\left\{\mathbb{D}_{\Psi_\rho(|D|^2)}\mathscr{M}(t)\right\}g(H_\rho)\frac{A_\rho}{t}g(H_\rho)\mathscr{M}(t)f(H_\rho)+{\rm hc},\\
I_2(t)&=f(H_\rho){\rm i}\left[V,\mathscr{M}(t)\right]g(H_\rho)\frac{A_\rho}{t}g(H_\rho)\mathscr{M}(t)f(H_\rho)+{\rm hc},\\
I_3(t)&=-\frac{1}{t}f(H_\rho)\mathscr{M}(t)g(H_\rho)\frac{A_\rho}{t}g(H_\rho)\mathscr{M}(t)f(H_\rho),\\
I_4(t)&=\frac{1}{t}f(H_\rho)\mathscr{M}(t)g(H_\rho){\rm i}\left[H_\rho,A_\rho\right]_{-2\rho}g(H_\rho)\mathscr{M}(t)f(H_\rho).
\end{align}

\noindent
{\bf Estimate for $I_3$ and $I_4$.}\quad By the same computations as \eqref{the2_7}, \eqref{the1_1}, \eqref{the1_4} and
\begin{equation}
\mathscr{M}(t)=\mathscr{M}(t)\chi_1\left(\frac{|x|}{2t}\right)+\frac{1}{2}\sum_{j=1}^n\chi'\left(\frac{|x|}{2t}\right)\frac{x_j}{|x|}\left[\chi_1\left(\frac{|x|}{2t}\right),\Psi'_\rho\left(|D|^2\right)D_j\right],\label{the1_6}
\end{equation}
we have
\begin{equation}
I_3(t)=-\frac{1}{t}f(H_\rho)\mathscr{M}(t)g(H_\rho)\chi_1\left(\frac{|x|}{2t}\right)\frac{A_\rho}{t}\chi_1\left(\frac{|x|}{2t}\right)g(H_\rho)\mathscr{M}(t)f(H_\rho)+\mathcal{O}\left(t^{-2}\right).\label{the1_7}
\end{equation}
We have, using $[G_\rho(D)D_j,\chi_1(|x|/(2t))]=\mathcal{O}(t^{-1})$ similar to \eqref{the2_4},
\begin{gather}
g(H_\rho)\chi_1\left(\frac{|x|}{2t}\right)\frac{A_\rho}{t}\chi_1\left(\frac{|x|}{2t}\right)g(H_\rho)\nonumber\\
=g(H_\rho)G_\rho\left(|D|^2\right)D\cdot\frac{x}{|x|}\chi_1\left(\frac{|x|}{2t}\right)\frac{|x|}{2t}\chi_1\left(\frac{|x|}{2t}\right)g(H_\rho)+{\rm hc}+\mathcal{O}\left(t^{-1}\right)\nonumber\\
\leqslant2\left\|g(H_\rho)G_\rho\left(|D|^2\right)D\cdot\frac{x}{|x|}\chi_1\left(\frac{|x|}{2t}\right)\right\|\left\|\frac{|x|}{2t}\chi_1\left(\frac{|x|}{2t}\right)g(H_\rho)\right\|+\mathcal{O}\left(t^{-1}\right),\label{the1_8}
\end{gather}
and we then estimate
\begin{equation}
I_3(t)\geqslant-\frac{\theta}{t}f(H_\rho)\mathscr{M}(t)^2f(H_\rho)+\mathcal{O}\left(t^{-2}\right),\label{the1_9}
\end{equation}
where we put $\theta=6\theta_0\|g(H_\rho)G_\rho(|D|^2)D\cdot x/|x|\|\|g(H_\rho)\|$.
We next estimate $I_4$. It follows from \eqref{the2_7} that
\begin{equation}
\left[g(H_\rho),\mathscr{M}(t)\right]=\frac{1}{2}\left[g(H_\rho),\Psi'_\rho\left(|D|^2\right)D\cdot\frac{x}{|x|}\chi'\left(\frac{|x|}{2t}\right)+{\rm hc}\right]+\mathcal{O}\left(t^{-1}\right).\label{the1_10}
\end{equation}
By \eqref{commutator_left}, we compute
\begin{gather}
\left[\Psi_\rho\left(|D|^2\right),\Psi'_\rho\left(|D|^2\right)D_j\frac{x_j}{|x|}\chi'\left(\frac{|x|}{2t}\right)\right]\nonumber\\
=\Psi'_\rho\left(|D|^2\right)D_j\left[|D|^2,\frac{x_j}{|x|}\chi'\left(\frac{|x|}{2t}\right)\right]\Psi'_\rho\left(|D|^2\right)+\Psi'_\rho\left(|D|^2\right)D_j\mathcal{O}\left(t^{-2}\right)\label{the1_11}
\end{gather}
and
\begin{gather}
\left\|\left(z-H_\rho\right)^{-1}\left[\Psi_\rho\left(|D|^2\right),\Psi'_\rho\left(|D|^2\right)D\cdot\frac{x}{|x|}\chi'\left(\frac{|x|}{2t}\right)\right]\left(z-H_\rho\right)^{-1}\right\|\nonumber\\
\lesssim t^{-1}|{\rm Im}z|^{-2}\langle z\rangle^2,\label{the1_12}
\end{gather}
recalling \eqref{the2_5} and \eqref{the2_6}. We write the commutator such that
\begin{gather}
\left[V_{\rm sing},\Psi'_\rho\left(|D|^2\right)D_j\frac{x_j}{|x|}\chi'\left(\frac{|x|}{2t}\right)\right]=V_{\rm sing}\langle x\rangle^{\gamma_{\rm sing}}\mathcal{O}\left(t^{-\gamma_{\rm sing}}\right)\Psi'_\rho\left(|D|^2\right)D_j\nonumber\\
+\ V_{\rm sing}\left[\Psi'_\rho\left(|D|^2\right)D_j,\frac{x_j}{|x|}\chi'\left(\frac{|x|}{2t}\right)\right]-\Psi'_\rho\left(|D|^2\right)D_j\mathcal{O}\left(t^{-\gamma_{\rm sing}}\right)\langle x\rangle^{\gamma_{\rm sing}}V_{\rm sing}.\label{the1_13}
\end{gather}
We here used $\langle x\rangle^{-\gamma_{\rm sing}}\chi'(|x|/(2t))=\mathcal{O}(t^{-\gamma_{\rm sing}})$. By the same computations as \eqref{the3_16}, we have
\begin{equation}
\langle x\rangle^{-\gamma_{\rm sing}}\left[\Psi'_\rho\left(|D|^2\right)D_j,\frac{x_j}{|x|}\chi'\left(\frac{|x|}{2t}\right)\right]=\mathcal{O}\left(t^{-2}\right).\label{the1_14}
\end{equation}
From \eqref{the1_13} and \eqref{the1_14}, we estimate
\begin{gather}
\left\|\left(z-H_\rho\right)^{-1}\left[V_{\rm sing},\Psi'_\rho\left(|D|^2\right)D\cdot\frac{x}{|x|}\chi'\left(\frac{|x|}{2t}\right)\right]\left(z-H_\rho\right)^{-1}\right\|\nonumber\\
\lesssim t^{-\gamma_{\rm sing}}|{\rm Im}z|^{-2}\langle z\rangle^2+t^{-2}|{\rm Im}z|^{-2}\langle z\rangle,\label{the1_15}
\end{gather}
using \eqref{the2_6} and $\|(z-H_\rho)^{-1}V_{\rm sing}\langle x\rangle^{\gamma_{\rm sing}}\|\lesssim|{\rm Im}z|^{-1}\langle z\rangle$. Because $V_{\rm short}\langle x\rangle^{\gamma_{\rm short}}$ is bounded by \eqref{short_range_decay}, we also estimate
\begin{gather}
\left\|\left(z-H_\rho\right)^{-1}\left[V_{\rm short},\Psi'_\rho\left(|D|^2\right)D\cdot\frac{x}{|x|}\chi'\left(\frac{|x|}{2t}\right)\right]\left(z-H_\rho\right)^{-1}\right\|\nonumber\\
\lesssim t^{-\gamma_{\rm short}}|{\rm Im}z|^{-2}\langle z\rangle+t^{-2}|{\rm Im}z|^{-2}.\label{the1_16}
\end{gather}
Noting that an almost analytic extension of $g$ has compact support, from \eqref{the3_12}, \eqref{the1_10}, \eqref{the1_12}, \eqref{the1_15}, and \eqref{the1_16}, we have
\begin{equation}
\left[g(H_\rho),\mathscr{M}(t)\right]=\mathcal{O}\left(t^{-\min\{\gamma_{\rm sing},\gamma_{\rm short},1+\gamma_{\rm long},2\}}\right)+\mathcal{O}\left(t^{-1}\right)=\mathcal{O}\left(t^{-1}\right).\label{the1_17}
\end{equation}
Incidentally, let $\lambda_1$ and $\lambda_2$ in Theorem \ref{the4} satisfy $(\lambda_1,\lambda_2)\cap\sigma_{\rm pp}(H_\rho)=\emptyset$. For $\lambda_1<\lambda<\lambda_2$, we take $0<\delta<\min\left\{\lambda-\lambda_1,\lambda_2-\lambda\right\}$. $\lambda\not\in\sigma_{\rm pp}(H_\rho)$ is equivalent to the point spectral measure $E_{H_\rho}(\{\lambda\})$ being zero. This implies that $E_{H_\rho}((\lambda-\delta,\lambda+\delta))\rightarrow0$ as $\delta\rightarrow0$ in the strong norm sense of $L^2(\mathbb{R}^n)$ and that, for the compact operator $K$ of \eqref{mourre_inequality1}, $E_{H_\rho}((\lambda-\delta,\lambda+\delta))K\rightarrow0$ as $\delta\rightarrow0$ in operator norm sense of $L^2(\mathbb{R}^n)$. Therefore, Theorem \ref{the4} yields
\begin{gather}
E_{H_\rho}\left((\lambda-\delta,\lambda+\delta)\right){\rm i}\left[H_\rho,A_\rho\right]_{-2\rho}E_{H_\rho}\left((\lambda-\delta,\lambda+\delta)\right)\nonumber\\
\geqslant\frac{\rho^2\lambda_1}{(1+\lambda_2)^{(1-\rho)/\rho}}E_{H_\rho}\left((\lambda-\delta,\lambda+\delta)\right)\label{mourre_inequality2}
\end{gather}
for a small $\delta>0$. We assume that $\supp g$ is sufficiently small without loss of generality because, if not, $\supp g$ can be covered by $\bigcup_{k=1}^N\supp g_k$ where $\supp g_k$ is small (see the proof of \cite[Proposition 4.4.7]{DeGe}). By virtue of \eqref{the1_17} and \eqref{mourre_inequality2}, there exists $c=c_{\rho g}>0$ such that $I_4$ is estimated as
\begin{equation}
I_4(t)\geqslant\frac{c}{t}f(H_\rho)\mathscr{M}(t)g(H_\rho)^2\mathscr{M}(t)f(H_\rho)=\frac{c}{t}f(H_\rho)\mathscr{M}(t)^2f(H_\rho)+\mathcal{O}\left(t^{-2}\right).\label{the1_18}
\end{equation}
We here choose $\theta_0>0$ which satisfies $0<\theta<c$ noting the definition of $\theta$, and put $\mathscr{K}(t)$
\begin{equation}
\mathscr{K}(t)=\frac{1}{2}\left\{\Psi'_\rho\left(|D|^2\right)D-\frac{x}{2t} \right\}\cdot\frac{x}{|x|}\chi'\left(\frac{|x|}{2t}\right)+{\rm hc}
\end{equation}
as in \eqref{the3_13}. From \eqref{the1_9} and \eqref{the1_18}, using the inequality
\begin{equation}
\mathscr{M}(t)^2\geqslant\chi\left(\frac{|x|}{2t}\right)^2+\mathscr{K}(t)^2-\left\{2\mathscr{K}(t)^2+\frac{1}{2}\chi\left(\frac{|x|}{2t}\right)^2\right\}=\frac{1}{2}\chi\left(\frac{|x|}{2t}\right)^2-\mathscr{K}(t)^2,
\end{equation}
we have
\begin{gather}
I_3(t)+I_4(t)\geqslant\frac{c-\theta}{2t}f(H_\rho)\chi\left(\frac{|x|}{2t}\right)^2f(H_\rho)-\frac{c-\theta}{t}f(H_\rho)\mathscr{K}(t)^2f(H_\rho)+\mathcal{O}\left(t^{-2}\right).\label{the1_19}
\end{gather}
We note that, by virtue of Theorem \ref{the3},
\begin{gather}
\int_1^\infty\left|\left(\mathscr{K}(t)^2f(H_\rho)e^{-{\rm i}tH_\rho}\phi,f(H_\rho)e^{-{\rm i}tH_\rho}\phi\right)_{L^2}\right|\frac{{\rm d}t}{t}\lesssim\|\phi\|_{L^2}^2\label{the1_20}
\end{gather}
holds because $f(H_\rho)\mathscr{K}(t)^2f(H_\rho)$ has the following shape
\begin{gather}
f(H_\rho)\mathscr{K}(t)^2f(H_\rho)=f(H_\rho)\left\{\Psi'_\rho\left(|D|^2\right)D-\frac{x}{2t}\right\}\cdot\frac{x}{|x|}\chi'\left(\frac{|x|}{2t}\right)\nonumber\\
\times\ \chi'\left(\frac{|x|}{2t}\right)\frac{x}{|x|}\cdot\left\{\Psi'_\rho\left(|D|^2\right)D-\frac{x}{2t}\right\}f(H_\rho)+\mathcal{O}\left(t^{-1}\right),
\end{gather}
by \eqref{the2_4}.

\noindent
{\bf Estimate for $I_2$.}\quad By \eqref{the3_12}, \eqref{the1_15}, and \eqref{the1_16}, replacing $(z-H_\rho)^{-1}$ by $\langle H_\rho\rangle^{-1}$ in \eqref{the1_15} and \eqref{the1_16}, we have
\begin{equation}
\langle H_\rho\rangle^{-1}{\rm i}\left[V,\mathscr{M}(t)\right]\langle H_\rho\rangle^{-1}=\mathcal{O}\left(t^{-\min\{\gamma_{\rm sing},\gamma_{\rm short},1+\gamma_{\rm long},2\}}\right)
\end{equation}
and
\begin{equation}
I_2(t)=\mathcal{O}\left(t^{-\min\{\gamma_{\rm sing},\gamma_{\rm short},1+\gamma_{\rm long},2\}}\right).\label{the1_21}
\end{equation}
\\
\noindent
{\bf Estimate for $I_1$.}\quad Put $R(x)=\chi(|x|)$. Then, by the formula \eqref{heisenberg2}, $I_1$ is
\begin{gather}
I_1(t)=\frac{1}{t}f(H_\rho)\left\{\Psi'_\rho\left(|D|^2\right)D-\frac{x}{2t}\right\}\cdot\left(\nabla^2R\right)\left(\frac{x}{2t}\right)\left\{\Psi'_\rho\left(|D|^2\right)D-\frac{x}{2t}\right\}\nonumber\\
\times\ g(H_\rho)\frac{A_\rho}{t}g(H_\rho)\mathscr{M}(t)f(H_\rho)+{\rm hc}+\mathcal{O}\left(t^{-2}\right)=I_5(t)+I_6(t)+\mathcal{O}(t^{-2}),\label{the1_22}
\end{gather}
where we defined $I_5$ and $I_6$ by
\begin{gather}
I_5(t)=\frac{1}{t}f(H_\rho)\left\{\Psi'_\rho\left(|D|^2\right)D-\frac{x}{2t}\right\}\cdot\left(\nabla^2R\right)\left(\frac{x}{2t}\right)\left\{\Psi'_\rho\left(|D|^2\right)D-\frac{x}{2t}\right\}\nonumber\\
\quad\times\ g(H_\rho)\frac{A_\rho}{t}g(H_\rho)\chi'\left(\frac{|x|}{2t}\right)\frac{x}{|x|}\cdot\left\{\Psi'_\rho\left(|D|^2\right)D-\frac{x}{2t}\right\}f(H_\rho)+{\rm hc},\\
I_6(t)=\frac{1}{t}f(H_\rho)\left\{\Psi'_\rho\left(|D|^2\right)D-\frac{x}{2t}\right\}\cdot\left(\nabla^2R\right)\left(\frac{x}{2t}\right)\left\{\Psi'_\rho\left(|D|^2\right)D-\frac{x}{2t}\right\}\nonumber\\
\quad\times\ g(H_\rho)\frac{A_\rho}{t}g(H_\rho)\chi\left(\frac{|x|}{2t}\right)f(H_\rho)+{\rm hc},
\end{gather}
using \eqref{the2_4},
\begin{equation}
\mathscr{M}(t)=\chi'\left(\frac{|x|}{2t}\right)\frac{x}{|x|}\cdot\left\{\Psi'_\rho\left(|D|^2\right)D-\frac{x}{2t}\right\}+\chi\left(\frac{|x|}{2t}\right)+\mathcal{O}(t^{-1})\label{the1_23}
\end{equation}
and \eqref{the1_5}. Let $\chi_2\in C_0^\infty((\theta_0/2,\infty))$ satisfy $\chi'=\chi'\chi_2$. We write $I_5$ such that
\begin{gather}
I_5(t)=\frac{1}{t}f(H_\rho)\sum_{j=1}^n\left\{\Psi'_\rho\left(|D|^2\right)D_j-\frac{x_j}{2t}\right\}\chi_2\left(\frac{|x|}{2t}\right)\nonumber\\
\times\ \mathcal{O}\left(1\right)\sum_{k=1}^n\chi_2\left(\frac{|x|}{2t}\right)\left\{\Psi'_\rho\left(|D|^2\right)D_k-\frac{x_k}{2t}\right\}f(H_\rho)+\mathcal{O}\left(t^{-2}\right),\label{the1_24}
\end{gather}
where we also used \eqref{the1_5}. We finally estimate $I_6$. By the same computations as \eqref{the1_11}, \eqref{the1_12}, \eqref{the1_13}, \eqref{the1_14} and \eqref{the1_15}, we have
\begin{equation}
\left[\chi_2\left(\frac{|x|}{2t}\right)\Psi'_\rho\left(|D|^2\right)D_j,g(H_\rho)\right]=\mathcal{O}\left(t^{-\min\{\gamma_{\rm sing},\gamma_{\rm short},1+\gamma_{\rm long},2\}}\right).\label{the1_25}
\end{equation}
We also have
\begin{equation}
\left[\chi_2\left(\frac{|x|}{2t}\right)\frac{x_j}{t},g(H_\rho)\right]=\mathcal{O}(t^{-1})\label{the1_26}
\end{equation}
by \eqref{the2_7}. \eqref{the1_25} and \eqref{the1_26} imply
\begin{equation}
\left[\chi_2\left(\frac{|x|}{2t}\right)\left\{\Psi'_\rho\left(|D|^2\right)D_j-\frac{x_j}{2t}\right\},g(H_\rho)\right]=\mathcal{O}(t^{-1}).\label{the1_27}
\end{equation}
We note that
\begin{equation}
\left[\chi_2\left(\frac{|x|}{2t}\right)\frac{x_j}{t},\frac{x_k}{t}\Psi'_\rho\left(|D|^2\right)D_k\right]=\frac{x_k}{t}\left[\chi_2\left(\frac{|x|}{2t}\right)\frac{x_j}{t},\Psi'_\rho\left(|D|^2\right)D_k\right]=\mathcal{O}\left(t^{-1}\right)\label{the1_28}
\end{equation}
by \eqref{the1_3} and that
\begin{gather}
\left[\chi_2\left(\frac{|x|}{2t}\right)\Psi'_\rho\left(|D|^2\right)D_j,\frac{x_k}{t}\Psi'_\rho\left(|D|^2\right)D_k\right]\nonumber\\
=\chi_2\left(\frac{|x|}{2t}\right)\mathcal{O}(t^{-1})\Psi'_\rho\left(|D|^2\right)D_k+\frac{x_k}{t}\mathcal{O}(t^{-1})\Psi'_\rho\left(|D|^2\right)D_j\label{the1_29}
\end{gather}
for $1\leqslant j,k\leqslant n$, where we used \eqref{the1_3} again in the second term on the right-hand side of \eqref{the1_29}.
From \eqref{the1_28} and \eqref{the1_29}, we have
\begin{equation}
\left[\chi_2\left(\frac{|x|}{2t}\right)\left\{\Psi'_\rho\left(|D|^2\right)D_j-\frac{x_j}{2t}\right\},\frac{A_\rho}{t}\right]g(H_\rho)=\mathcal{O}\left(t^{-1}\right).\label{the1_30}
\end{equation}
Clearly,
\begin{equation}
\frac{A_\rho}{t}g(H_\rho)\left[\chi_2\left(\frac{|x|}{2t}\right)\left\{\Psi'_\rho\left(|D|^2\right)D_j-\frac{x_j}{2t}\right\},\chi\left(\frac{|x|}{2t}\right)\right]=\mathcal{O}(t^{-1})\label{the1_31}
\end{equation}
holds by \eqref{the2_4} and \eqref{the1_5}. Combining \eqref{the1_27}, \eqref{the1_30} and \eqref{the1_31}, we have
\begin{equation}
\left\langle\frac{x}{2t}\right\rangle^{-1}\left[\chi_2\left(\frac{|x|}{2t}\right)\left\{\Psi'_\rho\left(|D|^2\right)D_j-\frac{x_j}{2t}\right\},g(H_\rho)\frac{A_\rho}{t}g(H_\rho)\chi\left(\frac{|x|}{2t}\right)\right]=\mathcal{O}\left(t^{-1}\right).\label{the1_32}
\end{equation}
By \eqref{the1_32}, we find that $I_6$ has the estimate
\begin{gather}
I_6(t)=\frac{1}{t}f(H_\rho)\sum_{j=1}^n\left\{\Psi'_\rho\left(|D|^2\right)D_j-\frac{x_j}{2t}\right\}\chi_2\left(\frac{|x|}{2t}\right)\nonumber\\
\times\ \mathcal{O}\left(1\right)\sum_{k=1}^n\chi_2\left(\frac{|x|}{2t}\right)\left\{\Psi'_\rho\left(|D|^2\right)D_k-\frac{x_k}{2t}\right\}f(H_\rho)+\mathcal{O}\left(t^{-2}\right).\label{the1_33}
\end{gather}
By virtue of Theorem \ref{the3}, \eqref{the1_22}, \eqref{the1_24} and \eqref{the1_33},
\begin{gather}
\int_1^\infty\left|\left(I_1(t)e^{-{\rm i}tH_\rho}\phi,e^{-{\rm i}tH_\rho}\phi\right)_{L^2}\right|{\rm d}t\lesssim\|\phi\|_{L^2}^2\label{the1_34}
\end{gather}
is obtained.

There exists a constant $C>0$ such that
\begin{gather}
\left(\left\{\mathbb{D}_{H_\rho}\mathscr{L}(t)\right\}e^{-{\rm i}tH_\rho}\phi,e^{-{\rm i}tH_\rho}\phi\right)_{L^2}\geqslant\frac{c-\theta}{2t}\left\|\chi\left(\frac{|x|}{2t}\right)f(H_\rho)e^{-{\rm i}tH_\rho}\phi\right\|_{L^2}^2\nonumber\\
-\frac{c-\theta}{t}\left|\left(\mathscr{K}(t)^2f(H_\rho)e^{-{\rm i}tH_\rho}\phi,f(H_\rho)e^{-{\rm i}tH_\rho}\phi\right)_{L^2}\right|\nonumber\\
-\left|\left(I_1(t)e^{-{\rm i}tH_\rho}\phi,e^{-{\rm i}tH_\rho}\phi\right)_{L^2}\right|-Ct^{-\min\{\gamma_{\rm sing},\gamma_{\rm short},1+\gamma_{\rm long},2\}}\|\phi\|_{L^2}^2
\end{gather}
holds by \eqref{the1_19} and \eqref{the1_21}. This completes our proof for $0<\rho<1$ by \eqref{the1_20}, \eqref{the1_34}, and $\min\{\gamma_{\rm sing},\gamma_{\rm short},1+\gamma_{\rm long},2\}>1$. In the case where $\rho=1$, as in the proofs before, we simply replace $\Psi'_\rho$ by $1$ and reduce many of the computations. For more details, see \cite[Proposition 4.4.7]{DeGe} or \cite[Theorem 2.38]{Iso}.
\end{proof}

\begin{Rem}\label{rem6}
If $1/2\leqslant\rho\leqslant1$, we can prove Theorem {\rm\ref{the1}} even by adopting the conjugate operator $A$ with some modifications to the proof presented above. In particular, $Ag(H_\rho)\langle x\rangle^{-1}$ is a bounded operator because $\langle D\rangle g(H_\rho)$ is a bounded operator.
\end{Rem}

\noindent\textbf{Acknowledgments.} This study was supported by JSPS KAKENHI Grant Numbers JP20K03625 and JP21K03279. The author thanks Serge Richard and Avraham Soffer for their valuable suggestions and comments.\\




\end{document}